\newcounter{ass}                           
\newcounter{rem}
\newcounter{lem}
\newcounter{def}
\newcounter{thm}
\newtheorem{assumption}[ass]{Assumption}
\newtheorem{remark}[rem]{Remark}
\newtheorem{lemma}[lem]{Lemma}
\newtheorem{definition}[def]{Definition}
\newtheorem{theorem}[thm]{Theorem}
\titlespacing\section{0pt}{12pt plus 3pt minus 3pt}{1pt plus 1pt minus 1pt}
\titlespacing\subsection{0pt}{10pt plus 3pt minus 3pt}{1pt plus 1pt minus 1pt}
\titlespacing\subsubsection{0pt}{8pt plus 3pt minus 3pt}{1pt plus 1pt minus 1pt}
\definecolor{myblue}{RGB}{0,102,204}
\definecolor{lime}{HTML}{A6CE39}
\DeclareRobustCommand{\orcidicon}{
	\begin{tikzpicture}
	\draw[lime, fill=lime] (0,0) 
	circle [radius=0.16] 
	node[white] {{\fontfamily{qag}\selectfont \tiny ID}};
	\draw[white, fill=white] (-0.0625,0.095) 
	circle [radius=0.007];
	\end{tikzpicture}
	\hspace{-2mm}
}
\title{Robust tracking MPC for perturbed nonlinear systems - Extended version}
\author{
	Marco Polver\\
	Dipartimento di Ingegneria Gestionale, dell'Informazione e della Produzione\\
	Università degli Studi di Bergamo (24044 Dalmine, Bergamo, Italy)\\
	\texttt{polvermarco@gmail.com}\\
	\And
	Daniel Limon\\
	Departamento de Ingeniería de Sistemas y Automática, Escuela Técnica Superior de Ingeniería\\Universidad de Sevilla (41092 Sevilla, Spain)\\
	\texttt{dlm@us.es}\\
	\And
	Fabio Previdi\\
	Dipartimento di Ingegneria Gestionale, dell'Informazione e della Produzione\\
	Università degli Studi di Bergamo (24044 Dalmine, Bergamo, Italy)\\
	\texttt{fabio.previdi@unibg.it}\\
	\And
	Antonio Ferramosca\\
	Dipartimento di Ingegneria Gestionale, dell'Informazione e della Produzione\\
	Università degli Studi di Bergamo (24044 Dalmine, Bergamo, Italy)\\
	\texttt{antonio.ferramosca@unibg.it}\\
}
\begin{document}

\twocolumn[ % Method A for two-column formatting
  \begin{@twocolumnfalse} % Method A for two-column formatting

    \maketitle

    \begin{abstract}
      This paper presents a novel robust predictive controller for constrained nonlinear systems that is able to track piece-wise constant setpoint signals. The tracking model predictive controller presented in this paper extends the nonlinear MPC for tracking to the more complex case of nonlinear systems subject to bounded and not necessarily additive perturbations. The optimal control problem that is solved at each step penalizes the deviation of the predicted nominal system trajectory from an artificial reference, which is added as a decision variable, as well as the distance between the artificial reference and the setpoint. Robust feasibility is ensured by imposing conservative constraints that take into account the effect of uncertainties and convergence to a neighborhood of any feasible setpoint is guaranteed by means of an appropriate terminal cost and an extended stabilizing terminal constraint. In the case of unreachable setpoints, convergence to a neighborhood of the optimal reachable steady output is also proved.
    \end{abstract}
    %\keywords{First keyword \and Second keyword \and More} % (optional)
    \vspace{0.35cm}

  \end{@twocolumnfalse} % Method A for two-column formatting
] % Method A for two-column formatting

%\begin{multicols}{2} % Method B for two-column formatting (doesn't play well with line numbers), comment out if using method A

%%%%%%%%%%%%%%%  Main text   %%%%%%%%%%%%%%%
% \linenumbers

\section{Introduction}
\label{sec:introduction}
Model Predictive Control (MPC) is an advanced control method for multivariable linear and nonlinear systems, that is able to provide stability and robustness, taking into account input, state, and output constraints \cite{Camacho2007,Rawlings2017_MPC}.
Most MPCs are designed for regulation purposes, namely they aim at steering the system state to a fixed setpoint. The asymptotic stability of the closed-loop system and the recursive feasibility of the MPC can be guaranteed by employing terminal cost functions and terminal constraints that satisfy some conditions in a neighborhood of the reference equilibrium \cite{Mayne2014model,Mayne2000constrained}. 

It is not unusual, however, for setpoints to change abruptly over time. In such cases, the design of the MPC for the previous setpoint might be invalid for the following one and the closed-loop system might fail to track the desired reference signal \cite{Bemporad1997nonlinear,Pannocchia2005offset,Rossiter1996guaranteeing}. Among the solutions to the issues provided by changing setpoints, we find dual-mode controllers, as the one proposed in \cite{Chisci2003dual}, MPC algorithms that guarantee local stability and asymptotic tracking of constant references like the one proposed in \cite{Magni2005solution}, and reference governors, e.g. \cite{Bemporad1997nonlinear,Angeli1999governors,Garone2017reference}. In cases where the evolution of the reference is known or predictable, another solution is given by offset-free MPC formulations, e.g. the one proposed in \cite{Kohler2022offset}, a formulation that simply minimizes the predicted output error, does not require complex offline computations and allows also for the output regularization of systems that are subject to (predictable) disturbances.

Nowadays, many tracking MPC formulations are inspired by the predictive controllers for linear systems presented in \cite{Limon2015_tracking,Ferramosca2009_tracking,Limon2008_linear_tracking}. The main feature of these methods is the addition of an artificial reference as a new decision variable; the cost function to be minimized penalizes both the tracking error with respect to the artificial steady state and the deviation between the artificial setpoint and the actual setpoint (by means of the so-called offset cost function); moreover, an extended terminal constraint based on an invariant set for tracking is added. All these features ensure the recursive feasibility of the controller and the convergence of the system output to the admissible steady output that is the closest to the actual setpoint. Following the same rationale, an MPC for tracking piece-wise constant reference signals for nonlinear systems was presented in \cite{Limon2018_NMPCT}. A further MPC for tracking for nonlinear systems was presented in \cite{Berberich2022mpc}, which controls perturbation-free nonlinear systems without hard state constraints exploiting the linearized dynamics of the system.

The previously listed MPCs for tracking, however, do not deal with the issues that might occur in case of bounded disturbances or model uncertainties, which could still lead to instability and infeasibility. In the case of MPCs for regulation, robustness is usually achieved by means of min-max or tube-based controllers. 
The robust MPC algorithms that follow a min-max formulation, as \cite{Cuzzola2002minmax,bemporad2003min,diehl2007formulation,Diehl2004minmax,Kerrigan2004_minmax_MPC,kothare1996robust,lazar2006min,Liu2018minmax,Villanueva2017minmax}, compute a control sequence that verifies the constraints for any possible perturbation, while minimizing the worst case performance index.

Tube-based MPC \cite{Mayne2005_robust_MPC} is a feedback MPC, as its cost function is minimized by optimizing over a class of control laws that is chosen so as to reduce the deviation of the perturbed state trajectories from the nominally predicted ones.
All the possible deviations generate a tube, i.e. a set (or sequence of sets) that describes how the uncertainty propagates while making nominal predictions and is usually explicitly used to restrict the state and input constraints on the nominal predictions to ensure robust feasibility of the controller. Differently from min-max controllers, tube-based MPC formulations optimize the performance index related to the nominal, i.e. perturbation-free, case.
Among the available regulation tube-based MPCs, we find formulations based on tubes that are pre-computed offline, thus often more conservative, as \cite{Mayne2005_robust_MPC,Chisci2001_disturbances,Mayne2011tube,Yu2013tube}, and other controllers that exploit a parameterized version of the tubes that allows them to be added as decision variables in the optimization problem that has to be solved online, as in \cite{Kohler2021_RNMPCT,Lopez2019dynamic,Rakovic2012homothetic,Rakovic2016elastic,Rakovic2012parameterized}.

In the tracking context, effective robust tube-based MPCs were proposed for the control of linear systems subject to additive disturbances in \cite{Ferramosca2012_robust_MPCT,Limon2010_robust_linear_MPC}, while a solution for nonlinear affine systems was presented in \cite{Zamani2022_robust_NMPCT} for tracking unknown references in case of slowly varying disturbances and a solution for nonlinear systems subject to additive perturbations can be found in \cite{Cunha2022rnmpct}.

The aim of the present work is to extend the results of \cite{Limon2018_NMPCT} to the control of nonlinear systems that are subject to generic not necessarily additive perturbations, in order to provide the design conditions that ensure recursive feasibility and robustness of the controller and input-to-state stability of the closed-loop system \cite{Limon2009iss}.
With this aim, the robust design of the controller is inspired by \cite{Limon2010_robust_NMPC}, where a robust MPC for the control of uncertain uniformly continuous systems is presented, while input-to-state stability is proved following similar arguments to those made in \cite{Kohler2020_periodic}, where a nonlinear tracking MPC for known periodic signals is introduced. However, while \cite{Kohler2020_periodic} treats perturbation-free systems and known periodic (and potentially unreachable) reference signals and imposes the employment of quadratic cost functions, the context of the current work is different; in fact, we consider systems subject to bounded perturbations, setpoints that change unpredictably and allow for the usage of a wider family of cost functions.

The manuscript is organized as follows. In Section 2, the constrained tracking problem is stated. In Section 3, the proposed robust MPC for tracking is presented and its convergence properties are proved. In Section 4, a case study showing the effectiveness of the proposed method is introduced. In Section 5, some conclusions are drawn.

\subsection*{Notation}
$\mathbb{R}_{> i}$ ($\mathbb{R}_{\geq i}$) denotes the set of real numbers that are greater than (or equal to) $i$.
Concatenations of column vectors are represented as $(a,b) \coloneqq  [a^\top \; b^\top]^\top$.
Bold lowercase letters denote sequences of scalars or vectors ($\mathbf{u} = \{ u(k),\hdots,u(j) \}$).
The $n \times n$ identity matrix is denoted $I_n$, while the diagonal matrix whose diagonal is equal to the vector $a$ is written as $\mathrm{diag}(a)$.
Calligraphic uppercase letters are used to indicate function classes ($\mathcal{K}$, $\mathcal{K}_\infty$ and $\mathcal{KL}$) and sets, e.g. $\mathcal{X}$, $\mathcal{U}$, and $\mathcal{T}$. The $j$-ary Cartesian power of a set $\mathcal{U}$, i.e. $\mathcal{U} \times \mathcal{U} \times \hdots \times \mathcal{U}$, is denoted $\mathcal{U}^j$. The absolute value of a scalar value $a$ is denoted as $\vert a \vert$. The 2-norm of the vector $x$ is denoted $\Vert x \Vert$.

The solution of system $x(k+1) = f \left( x(k),u(k),w(k) \right)$, where $x$ is the system state, $u$ the system input, and $w$ a perturbation, at sampling time $j$, starting from the initial condition $x$, given the input sequence $\mathbf{u} \coloneqq \left\{ u(0),\hdots,u(j-1) \right\}$ and the perturbation signal $\mathbf{w} \coloneqq \left\{ w(0),\hdots,w(j-1) \right\}$, is denoted $\phi(j;x,\mathbf{u},\mathbf{w})$. 
The projection of a set $\mathcal{A} \subset \mathcal{X} \times \mathcal{U}$ on the state dimensions is denoted $\mathrm{proj}_x(\mathcal{A})$.

A function $\alpha: \; \mathbb{R}_{\geq 0} \rightarrow \mathbb{R}_{\geq 0}$
is a $\mathcal{K}$ function if $\alpha(0) = 0$ and it is strictly increasing. 
A function $\alpha: \; \mathbb{R}_{\geq 0} \rightarrow \mathbb{R}_{\geq 0}$
is a $\mathcal{K}_\infty$ function if it is a $\mathcal{K}$ function and is unbounded.
A function $\beta: \; \mathbb{R}_{\geq 0} \times \mathbb{Z}_{\geq 0} \rightarrow \mathbb{R}_{\geq 0}$ is of class $\mathcal{KL}$ if for every fixed $k$ the function $\beta(\cdot,k)$ is a $\mathcal{K}$ function and for fixed $s$ the function $\beta(s,\cdot)$ is non-increasing and such that $\lim_{k \to \infty} \beta(s,k) = 0$.

\section{Problem formulation}
\label{sec:rnmpct_problem_formulation}
We consider a nonlinear time-invariant discrete-time system described by the equations
\begin{equation}
	\label{eq:rnmpct_system_equations}
	\begin{split}
		x(k+1) &= f\big(x(k),u(k),w(k)\big),\\
		y(k) &= h\big(x(k),u(k)\big),
	\end{split}
\end{equation}
where $x(k) \in \mathbb{R}^n$ is the system state, $u(k) \in \mathbb{R}^m$ is the controlled input, $y(k) \in \mathbb{R}^p$ is the system output, and $w(k) \in \mathcal{W} \subset \mathbb{R}^r$ is an unknown but bounded perturbation that lies in a known hypercube $\mathcal{W} \coloneqq \left\{ w: \; \vert w_i \vert \leq \bar{w}_i, \; \forall i \in [1,r] \right\}$. The system state $x$ is assumed to be measured at each sampling instant $k \geq 0$.
\begin{remark}
	The perturbation $w(k)$ can also represent bounded parameter variations over time or due to some scheduling variables. Therefore, in cases where such variations are difficult to track but are ensured to be bounded, the MPC proposed in the next sections can be effectively used to control nonlinear time-varying or parameter-varying systems.
\end{remark}
The system is subject to hard state and input constraints, namely
\begin{equation}
	\label{eq:rnmpct_system_constraints}
	\left( x(k),u(k) \right) \in \mathcal{Z} \coloneqq \mathcal{X} \times \mathcal{U}, \quad \forall k \geq 0.
\end{equation}
All the equilibrium points of the nominal system $(x_s,u_s,y_s)$ are such that
\begin{equation}
	\label{eq:rnmpct_state_input_equilibria}
	x_s = f(x_s,u_s,0), \quad	y_s = h(x_s,u_s).
\end{equation}
In order to ensure the recursive feasibility and stability of the proposed robust MPC for tracking, it is necessary to make some (non restrictive) assumptions on the functions $f(\cdot,\cdot,\cdot)$ and $h(\cdot,\cdot)$ and the sets $\mathcal{X}$ and $\mathcal{U}$.
\begin{assumption}[Continuity of the state and output functions]
	\label{ass:rnmpct_continuity}
	The output function $h(\cdot,\cdot)$ is continuous at any equilibrium point. The state function $f(\cdot,\cdot,\cdot)$ is such that each component $f_i(\cdot,\cdot,\cdot)$, $\forall i \in [1,n]$, is component-wise uniformly continuous. Consequently, there exist functions $\sigma_{x,ia}(\cdot),\sigma_{u,ib}(\cdot),\sigma_{w,ic}(\cdot) \in \mathcal{K}$ such that, for all $i \in [1,n]$,
	\begin{equation}
		\label{eq:rnmpct_componentwise_uniform_continuity}
		\begin{split}
			&\vert f_i(x,u,w) \! - \! f_i(\check{x},\check{u},\check{w}) \vert \leq \sum_{a = 1}^n \sigma_{x,ia} \left( x_a \! - \! \check{x}_a \right)\\ &+ \sum_{b = 1}^m \sigma_{u,ib} \left( \vert u_b \! - \! \check{u}_b \vert \right) + \sum_{c = 1}^r \sigma_{w,ic} \left( \vert w_c \! - \! \check{w}_c \vert \right),
		\end{split}
	\end{equation}
	for all $(x,u,w)$ and $(\check{x},\check{u},\check{w})$ in $\mathcal{Z} \times \mathcal{W}$.
\end{assumption}
\begin{assumption}[Properties of the constraint sets]
	\label{ass:rnmpct_constraint_set_properties}
	The sets $\mathcal{X}$ and $\mathcal{U}$ are closed and their interior is non-empty.
\end{assumption}
Assumption \ref{ass:rnmpct_constraint_set_properties} comes from standard MPC literature. On the contrary, Assumption \ref{ass:rnmpct_continuity} poses conditions on the state and output functions that are stricter than the usual continuity assumption provided in the MPC literature. Nonetheless, these conditions are needed to prove the input-to-state stability of the closed-loop system and are not particularly restrictive, as also clarified by the following remark.
\begin{remark}
	A particular case of component-wise uniformly continuous functions is that of component-wise Lipschitz continuous functions, which is a wide family of functions. 
	
	Indeed, in the common case in which the constraint set $\mathcal{Z}$ is compact, Lipschitz continuity is a condition that is easy to verify. Moreover, as proved in \cite{Manzano2019_Choki}, Lipschitz continuity implies also component-wise Lipschitz continuity.
\end{remark}
The correct operation of the here-proposed robust MPC for tracking requires the knowledge of a compact set of feasible setpoints with inactive constraints. In order to build it, we first define the restricted constraint set
\begin{equation}
	\label{eq:rnmpct_inactive_constraints}
	\hat{\mathcal{Z}} \coloneqq \left\{ z: \; z+\delta_z \in \mathcal{Z}, \; \forall \delta_z \; \mathrm{s.t.} \; \Vert \delta_z \Vert \leq \epsilon \right\},
\end{equation}
where $\epsilon > 0$ can be arbitrarily small. The set \eqref{eq:rnmpct_inactive_constraints} defines the feasible points $(x,u)$ with inactive constraints. The sets defining the equilibrium points where the constraints are not active, which are supposed to be non-empty, are therefore defined as
\begin{subequations}
	\begin{equation}
		\label{eq:rnmpct_state_input_equilibria_inactive}
		\mathcal{Z}_s \coloneqq \left\{ (x,u) \in \hat{\mathcal{Z}}: \; x = f(x,u,0) \right\},
	\end{equation}
	\begin{equation}
		\label{eq:rnmpct_output_equilibria_inactive}
		\mathcal{Y}_s \coloneqq \left\{ y = h(x,u): \; (x,u) \in \mathcal{Z}_s \right\}.
	\end{equation}
\end{subequations}
Given that tracking control problems require to stabilize the equilibrium $(x_s,u_s)$ related to a certain output reference $y_s$ (or the closest admissible equilibrium in case the setpoint were infeasible), we must avoid any ambiguity and limit ourselves to the control of systems for which it is impossible to find multiple equilibria providing the same output. Therefore, we make the following assumption on the output function $h(\cdot,\cdot)$ and the set $\mathcal{Y}_s$.
\begin{assumption}[Properties of the output function]
	\label{ass:rnmpct_existence_gx_gu}
	The system output is such that each steady output $y_s$ uniquely defines an equilibrium point $(x_s,u_s)$. Moreover, there exist a locally Lipschitz continuous function $g_x: \mathcal{Y}_s \to \mathbb{R}^n$ and a continuous function $g_u: \mathcal{Y}_s \to \mathbb{R}^m$ such that
	\begin{equation}
		\label{eq:rnmpct_gx_gu}
		x_s = g_x(y_s), \; u_s = g_u(y_s).
	\end{equation}
\end{assumption}
\begin{remark}
	A sufficient condition for Assumption \ref{ass:rnmpct_existence_gx_gu} to be verified is that the functions $f(x,u,0)$ and $h(x,u)$ are continuously differentiable and that the Jacobian matrix
	\begin{equation}
		\begin{bmatrix}
			(A(x_s,u_s)-I_n) & B(x_s,u_s) \\ C(x_s,u_s) & D(x_s,u_s)
		\end{bmatrix},
	\end{equation}
	where
	
	\begin{equation}
		\begin{aligned}
			A(x_s,u_s)\! &= \! \frac{\partial f(x,u,0)}{\partial x}(x_s,u_s), & \!\!\!\! B(x_s,u_s) \! &= \! \frac{\partial f(x,u,0)}{\partial u}(x_s,u_s),\\
			C(x_s,u_s)\! &= \! \frac{\partial h(x,u)}{\partial x}(x_s,u_s), & \!\!\!\! D(x_s,u_s) \! &= \! \frac{\partial h(x,u)}{\partial u}(x_s,u_s),
		\end{aligned}
	\end{equation}
	
	is non-singular for all $(x_s,u_s) \in \mathcal{Z}_s$.
\end{remark}
The aim of this work is to design a state-feedback robust MPC for tracking based on nominal predictions that is able to stabilize the closed-loop system and to steer its output to a bounded set containing the admissible steady output that is the closest to the actual setpoint $y_t$. Furthermore, this property must hold even when the setpoint $y_t$ is changed to a new constant value, i.e. $y_t$ is a piece-wise constant reference signal.

\section{Controller design}
\label{sec:rnmpct_controller_design}
In the following, our robust nonlinear MPC for tracking is presented and its robustness and stability properties are proved.

\subsection{Equilibrium-dependent feedback policies}
The here-proposed MPC belongs to the family of tube-based MPCs, and the size of its tube depends on the functions $\sigma_{x,ia}(\cdot)$ and $\sigma_{w,ia}(\cdot)$, which define how uncertainty is propagated in each state dimension. For this reason, consistently with the rationale of feedback MPC \cite{Limon2009iss}, we exploit a family of prestabilizing feedback policies
\begin{equation}
	\label{eq:rnmpct_feedback_policies}
	u(k) = \pi(y_s,x(k),v(k)),
\end{equation}
where $v(k) \in \mathbb{R}^s$, with the aim of reducing the effect of uncertainty propagation. Nonetheless, in order to keep the component-wise uniform continuity of the closed-loop system under the feedback policies $\pi(y_s,x,v)$, these have to fulfill a uniform continuity assumption.
\begin{assumption}[Continuity of the feedback policies and existence of a continuous function $g_v: \mathcal{Y}_s \to \mathbb{R}^s$]
	\label{ass:rnmpct_pi_continuity}
	For a fixed $y_s$, the feedback policies $\pi(y_s,x,v)$ are uniformly continuous with respect to $x$ and $v$. Furthermore, for each equilibrium point, there exists a unique $v_s$ such that $u_s = \pi(y_s,x_s,v_s)$. There also exists a continuous function $g_v: \mathcal{Y}_s \to \mathbb{R}^s$, such that $v_s = g_v(y_s)$.
\end{assumption}
	\begin{remark}
		Assumption \ref{ass:rnmpct_pi_continuity} gives the designer the possibility to design different feedback policies for different $y_s$ values. Nonetheless, for systems that do not show critical nonlinearities it could be convenient to design a unique feedback policy.
\end{remark}
Within this feedback framework, the task of our MPC becomes that of finding the optimal value for $y_s$ and $v(k)$. For this reason, we redefine the constraint sets $\mathcal{Z}$ and $\hat{\mathcal{Z}}$ as
\begin{equation}
	\label{eq:rnmpct_new_system_constraints}
	\mathcal{Z}_\pi \coloneqq \left\{ (x,v): \; \exists y_s \in \mathcal{Y}_s \; \mathrm{s.t.} \; \left( x,\pi \left( y_s,x,v \right) \right) \in \mathcal{Z} \right\},
\end{equation}
and
\begin{equation}
	\label{eq:rnmpct_new_hat_Z}
	\hat{\mathcal{Z}}_\pi \coloneqq \left\{ (x,v): \; (x,v)+ \delta_z \in \mathcal{Z}_\pi, \; \forall \delta_z \; \mathrm{s.t.} \; \Vert \delta_z \Vert \leq \epsilon \right\}.
\end{equation}
The new sets of the admissible steady states and outputs are obtained as projections of the following set:
\begin{align}
	\label{eq:rnmpct_new_equilibria_set}
	\mathcal{E}_{\pi,s} \coloneqq \Big\{ (x,v,y_s): \; &(x,v) \in \hat{\mathcal{Z}}_\pi, \; x = f \left( x,\pi \left( y_s,x,v \right),0 \right), \notag \\ &y_s = h \left( x,\pi \left( y_s,x,v \right) \right) \Big\}.
\end{align}
In particular, 
\begin{equation}
	\label{eq:rnmpct_new_state_input_equilibria}
	\mathcal{Z}_{\pi,s} \coloneqq \left\{ (x,v): \; \exists (x,v,y) \in \mathcal{E}_{\pi,s} \right\},
\end{equation}
\begin{equation}
	\label{eq:rnmpct_new_output_equilibria}
	\mathcal{Y}_{\pi,s} \coloneqq \left\{ y_s: \; \exists (x,v,y_s) \in \mathcal{E}_{\pi,s} \right\}.
\end{equation}
The previously defined sets play an essential role in the design of the proposed robust MPC for tracking.

\subsection{Nominal model predictive control for tracking}
As for other nonlinear tracking MPC formulations, i.e. \cite{Ferramosca2009_tracking,Limon2008_linear_tracking,Limon2018_NMPCT,Ferramosca2012_robust_MPCT,Kohler2020_periodic,Limon2010_robust_linear_MPC,Zamani2022_robust_NMPCT}, the here-proposed MPC is given the ability to steer the system output to a set containing the admissible steady output that is the closest to the real setpoint $y_t$, and to keep its feasibility properties also when $y_t$ changes. This is done by adding an artificial setpoint $y_s$ as an extra decision variable, by designing a cost function that includes a term $V_O(y_s - y_t)$, which penalizes the deviation between the artificial reference and the actual setpoint, and by designing an invariant set for tracking to be used as terminal constraint. Next, the fundamental ingredients of the optimization problem that has to be solved at each sampling instant are presented and explained.

For a given state $x$ and a given setpoint $y_t$, and employing a prediction horizon $N_p > 0$, the cost function of the proposed MPC is defined as
\begin{equation}
	\label{eq:rnmpct_cost_function}
	\begin{split}
		&V_{N_p} \big( x(k),y_t,\mathbf{v}(k),y_s(k) \big)\\ &\coloneqq  \sum_{j = 0}^{N_p-1} \ell \big( \hat{x}(j \vert k) - x_s(k),v(j \vert k) - v_s(k) \big)\\
		&+ V_f \big( \hat{x}(N_p \vert k) - x_s(k), y_s(k) \big) + V_O(y_s(k) - y_t)
	\end{split}
\end{equation}
where $\mathbf{v}(k) \coloneqq \left\{ v(0 \vert k), \hdots,v(N_p-1 \vert k) \right\}$ is the sequence of control inputs to be optimized, $v_f(y_s(k))$ is an input such that $u = \pi \left( y_s(k),x,v_f(y_s(k)) \right)$, and whose properties will be explained later in the text, $y_s(k)$ is the artificial reference, such that $x_s(k) = g_x(y_s(k))$, $u_s(k) = g_u(y_s(k))$, and $v_s(k) = g_v(y_s(k))$, while $\hat{x}(j \vert k)$, for $j=1,\hdots,N_p$, are nominal state predictions, i.e. 
\begin{equation}
	\hat{x}(j+1 \vert k) = f \left( \hat{x}(j \vert k),\pi\left(y_s(k),\hat{x}(j \vert k),v(j \vert k) \right),0 \right).
\end{equation}
The stage cost function $\ell(x - x_s,v - v_s)$ and the terminal cost function $V_f(x - x_s,y_s)$ penalize the tracking error with respect to the artificial reference $y_s$, while $V_O(y_s - y_t)$ penalizes the distance between the artificial reference $y_s$ and the actual setpoint $y_t$.

The optimization problem $P_{N_p}(x(k),y_t)$ that is solved at each sampling instant by the here-proposed MPC is the following:
\begin{subequations}
	\label{eq:rnmpct_nominal_optimization_problem}
	\begin{align}
		&\min_{(\mathbf{v}(k),y_s(k))} \; V_{N_p}(x(k),y_t,\mathbf{v}(k),y_s(k))\\
		&\mathrm{subject \; to:} \nonumber \\
		&\hat{x}(0 \vert k) = x(k), \\
		&\begin{aligned}
			\hat{x}(j+1 \vert k) = &f \left( \hat{x}(j \vert k),\pi \left( y_s(k),\hat{x}(j \vert k),v(j \vert k) \right),0 \right),\\ &\forall j\in [0,N_p-1]
		\end{aligned}\\
		&\big( \hat{x}(j \vert k),v(j \vert k) \big) \in \mathcal{Z}_\pi(j), \; \forall j\in [0,N_p-1]\\
		&x_s = g_x(y_s(k)), \; v_s=g_v(y_s(k)), \\
		&\big( \hat{x}(N_p \vert k),y_s(k) \big) \in \mathcal{T},
	\end{align}
\end{subequations}
where the set sequence $\left\{ \mathcal{Z}_\pi(j) \right\}_{\geq 0}$ and the terminal constraint set $\mathcal{T}$ will be defined later in the text. The optimal solution to this optimization problem and the optimal cost function are denoted as $\big( \mathbf{v}^0(x(k),y_t), y_s^0(x(k),y_t) \big)$, or $(\mathbf{v}^0(k),y_s^0(k))$ for brevity, and $V_{N_p}^0(x(k),y_t)$, respectively. Considering the receding horizon policy, the control law provided by the proposed MPC is defined as
\begin{equation}
	\kappa_{N_p}(x(k),y_t)   \coloneqq   \pi   \left(y_s^0(k),  x(k),  v^0(0 \vert k) \right),
\end{equation}
where $v^0(0 \vert k)$ is the first element of $\mathbf{v}^0(x(k),y_t)$. Note that an extended terminal constraint is imposed on both the terminal state $\hat{x}(N_p \vert k)$ and the artificial reference $y_s$, and that none of the constraints involved in $P_{N_p}(x(k),y_t)$ depends on the actual setpoint $y_t$, which leads to the existence of a region of attraction $\mathcal{X}_{N_p}$ where $P_{N_p}(x(k),y_t)$ is feasible for all $y_t \in \mathbb{R}^p$.

\subsection{Robust design of the controller}
The robustness of our MPC for tracking is ensured by imposing that the nominal system trajectory stays inside the set sequence $\left\{ \mathcal{Z}_{\pi}(j) \right\}_{j \geq 0}$ of tightened constraints \cite{Limon2009iss,Limon2010_robust_NMPC}. Before explaining in detail how to compute such sequence, we introduce the following notation: with $\phi_\pi \left( j;x,y_s,\mathbf{v},\mathbf{w} \right)$ we denote the solution of system \eqref{eq:rnmpct_system_equations} at sampling time $j$ given the initial state $x$, the feedback policy $\pi(y_s,x,v)$, where $v$ comes from the sequence $\mathbf{v}$, and the uncertainty signal $\mathbf{w}$.

The computation of $\left\{ \mathcal{Z}_{\pi}(j) \right\}_{j \geq 0}$ requires the existence of the set sequences $\left\{ \mathcal{F}(j) \right\}_{j \geq 0}$ and $\left\{ \mathcal{R}(j) \right\}_{j \geq 0}$, whose properties are outlined in the following two assumptions.

\begin{assumption}[Bounds on the effect of the initial condition on the state trajectory]
	\label{ass:rnmpct_F_sequence}
	The sequence $\{ \mathcal{F}(j) \}_{j \geq 0}$ is such that:
	\begin{itemize}
		\item $\mathcal{F}(0) \coloneqq \{ x \in \mathbb{R}^n: \vert x_i \vert \leq \sum_{a=1}^r \sigma_{w,ia} (\bar{w}_a), \forall i \in [1,n] \}$;
		\item $\forall j > 0$, the sets $\mathcal{F}(j)$ ensure that, for all $y_s \in \mathcal{Y}_{\pi,s}$, for every feasible $(x,\mathbf{v})$, and for every $\check{x}$ such that $\check{x}-x \in \mathcal{F}(0)$, the condition $$\phi_\pi\big(j;\check{x},y_s,\mathbf{v},\mathbf{0}\big) \allowbreak \in \phi_\pi\big(j;x,y_s,\mathbf{v},\mathbf{0}\big) \oplus \mathcal{F}(j)$$ is verified.
	\end{itemize}
\end{assumption}
\begin{assumption}[Bounds on the effect of the perturbations on the state trajectory]
	\label{ass:rnmpct_R_sequence}
	The set sequence $\{ \mathcal{R}(j) \}_{j \geq 0}$ is such that, defining $\mathcal{R}(0) \coloneqq \{0\}$, it ensures that for all $y_s \in \mathcal{Y}_{\pi,s}$:
	\begin{enumerate}
		\item for every feasible $(x,\mathbf{v})$, the sets $\mathcal{R}(j)$, for $j > 0$, are such that $$\phi_\pi\left(j;x,y_s,\mathbf{v},\mathbf{w}\right) \in \\ \phi_\pi\left(j;x,y_s,\mathbf{v},\mathbf{0}\right) \oplus \mathcal{R}(j) \; \forall \mathbf{w} \in \mathcal{W}^j;$$
		\item $\mathcal{F}(j) \oplus \mathcal{R}(j) \subseteq \mathcal{R}(j+1)$, $\forall j \geq 0$.
	\end{enumerate}
\end{assumption}
The existence of $\left\{ \mathcal{F}(j) \right\}_{j \geq 0}$ ensures that, starting from close initial conditions, the distance between two different trajectories is bounded, while $\left\{ \mathcal{R}(j) \right\}_{j \geq 0}$ bounds the distance between the nominal trajectories computed by our MPC and the possible trajectories of the perturbed system.
\begin{remark}
	Note that the conditions of the sequences $\left\{ \mathcal{F}(j) \right\}_{j \geq 0}$ and $\left\{ \mathcal{R}(j) \right\}_{j \geq 0}$ are required to hold for all the feedback policies $\pi(y_s,x,v)$. 
	
	A method to compute these sequences in the particular case of component-wise Lipschitz continuous systems is reported in the Appendix \ref{alg:F_R}.
\end{remark}
Given the sequence $\left\{ \mathcal{R}(j) \right\}_{j \geq 0}$, the tightened constraint set sequence $\left\{ \mathcal{Z}_\pi(j) \right\}_{j \geq 0}$ is defined as
\begin{equation}
	\label{eq:rnmpct_restricted_constraints}
	\mathcal{Z}_\pi(j) \coloneqq \mathcal{Z}_\pi \ominus \left( \mathcal{R}(j) \times \{0\} \right).
\end{equation}
Now that the tightened constraints introduced in the control problem $P_{N_p}(\cdot,\cdot)$ have been defined, let us introduce the set of feasible setpoints $\mathcal{Y}_t$ as
\begin{equation}
	\label{eq:rnmpct_feasible_setpoints}
	\begin{split}
		\mathcal{Y}_t   \coloneqq   \Big\{ y_s   \in   \mathcal{Y}_{\pi,s} :  \big(g_x(y_s),g_v(y_s)\big)   \in   \mathcal{Z}_\pi(N_p) \Big\}.
	\end{split}
\end{equation}

\subsection{Recursive feasibility and convergence to a neighborhood of the best admissible steady output}
In order to prove recursive feasibility of $P_{N_p}(\cdot,\cdot)$, closed-loop stability, and convergence of the closed-loop system output to a neighborhood of the best admissible setpoint, the stage cost function, the terminal ingredients and the offset cost function must verify some conditions, which are listed in the following assumptions.

In order to ensure the recursive feasibility of $P_{N_p}(\cdot,\cdot)$, the terminal ingredients $\mathcal{T}$ and $V_f(\cdot)$ must satisfy the following assumption.
\begin{assumption}[Terminal ingredient properties]
	\label{ass:rnmpct_robust_terminal_set}
	The inputs $v_f(y_s) \in \mathbb{R}^s$, the set $\mathcal{T} \coloneqq \Big( \bigcup_{y_s \in \mathcal{Y}_t} \mathcal{X}_f(y_s) \Big) \times \mathcal{Y}_t$ and the sets $\Omega(y_s)$ are such that:
	\begin{enumerate}
		\item $\Omega(y_s)$ and $\mathcal{X}_f(y_s)$ are positively invariant sets for the nominal system $\hat{x}(k+1) = f \left( \hat{x}(k),\pi \left( y_s,\hat{x}(k),v_f(y_s) \right),0 \right)$;
		\item $\Omega(y_s) \times \{v_f(y_s)\} \subseteq \mathcal{Z}_\pi(N_p-1)$;
		\item $\mathcal{X}_f(y_s) \times \{v_f(y_s)\} \subseteq \mathcal{Z}_\pi(N_p)$;
		\item $\mathcal{X}_f(y_s) \oplus \mathcal{F}(N_p-1) \subseteq \Omega(y_s)$;
		\item $f \left( \hat{x},\pi \left( y_s,\hat{x},v_f(y_s) \right),0 \right) \in \mathcal{X}_f(y_s)$, for all $\hat{x} \in \Omega(y_s)$.
	\end{enumerate}
	Moreover, the terminal cost function $V_f(\cdot,\cdot)$ is a CLF such that, for all $y_s \in \mathcal{Y}_t$ and for all $x \in \Omega(y_s)$, it ensures
	\begin{align}
		\label{eq:rnmpct_lyap_ineq}
		&V_f \big( f\left( x,\pi \left( y_s,x,v_f(y_s) \right),0 \right) - x_s,y_s \big) - V_f(x - x_s,y_s)\notag \\ &\leq -\ell \left( x-x_s,v_f(y_s)-v_s \right).
	\end{align}
\end{assumption}
The conditions given by Assumption \ref{ass:rnmpct_robust_terminal_set} ensure that the set $\mathcal{T}$ is a \textit{robust invariant set for tracking}, the definition of which is provided hereafter.
\begin{definition}[Robust invariant set for tracking]
	For a given set of constraints $\mathcal{Z}_\pi$, a set of feasible setpoints $\mathcal{Y}_t \subseteq \mathcal{Y}_{\pi,s}$, and a local control law $v = \kappa(x,y_s)$, a set $\mathcal{T} \subset \mathbb{R}^n \times \mathbb{R}^s$ is an (admissible) robust invariant set for tracking for the system $x^+ = f (x,\pi(y_s,x,v),w)$ if, for all $(x,y_s) \in \mathcal{T}$, we have that $(x,\kappa(x,y_s)) \in \mathcal{Z}_\pi$, $y_s \in \mathcal{Y}_t$, and $(f (x,\pi(y_s,x,\kappa(x,y_s)),w),y_s) \in \mathcal{T}$, $\forall w \in \mathcal{W}$.
\end{definition}
In detail, the robustness of $\mathcal{T}$, which is used as terminal inequality constraint on both the terminal state $\hat{x}(N_p \vert k)$ and the artificial reference $y_s$, is with respect to the terminal state. Indeed, the sets $\Omega(y_s)$ are required to be robustly invariant for the system $x(k + 1)  =  f \big( x(k), \allowbreak \pi(y_s, x(k), v_f(y_s) ), 0 \big) + e$, where $e \in \mathcal{F}(N_p-1)$, while $\mathcal{Y}_t$, i.e. the set of feasible setpoints, is required to fulfill the typical conditions of a standard invariant set for tracking. Assumption \ref{ass:rnmpct_robust_terminal_set} also requires to design control inputs $v_f(y_s)$ that are able to locally asymptotically stabilize the system to any steady state contained in $\mathcal{Z}_{\pi,s}$.

Next, we state the conditions that the cost function must verify to enable the proof of recursive feasibility and input-to-state stability of the closed-loop system.
\begin{assumption}[Stage cost function and terminal cost function bounds]
	\label{ass:rnmpct_tracking_bounds}
	The stage cost function $\ell(\cdot,\cdot)$ and the terminal cost function $V_f(\cdot,\cdot)$ meet the next conditions:
	\begin{enumerate}
		\item There exists a function $\alpha_\ell(\cdot) \in \mathcal{K}_\infty$ such that $\ell(x,v) \geq \alpha_\ell(\Vert x \Vert)$ for all $(x,v) \in \mathbb{R}^n \times \mathbb{R}^s$.
		The stage cost function $\ell(\cdot,\cdot)$ is uniformly continuous in $\mathcal{Z}_\pi$. Therefore, there exist functions $\sigma_{\ell,x}(\cdot),\sigma_{\ell,v}(\cdot) \in \mathcal{K}$ such that
		\begin{equation}
			\begin{split}
				\vert \ell(x,v) - \ell(\check{x},\check{v}) \vert &\leq \sigma_{\ell,x}(\Vert x - \check{x} \Vert) + \sigma_{\ell,v}(\Vert v - \check{v} \Vert),\\ &\forall (x,v),(\check{x},\check{v}) \in \mathcal{Z}_\pi.
			\end{split}
		\end{equation}
		\item There exists a quadratic function $\alpha_f \left( \Vert x \Vert \right) \coloneqq \bar{\alpha}_f \Vert x \Vert^2$ such that $V_f \left( x - x_s, y_s \right) \leq \bar{\alpha}_f \Vert x-x_s \Vert^2$ for all $y_s \in \mathcal{Y}_t$ and for all $x \in \Omega(y_s)$.
		\item The terminal cost function $V_f(\cdot,\cdot)$ is uniformly continuous with respect to its first argument, namely there exists a function $\sigma_f(\cdot) \in \mathcal{K}$ such that
		\begin{equation}
			\vert V_f(x - x_s,y_s) - V_f(\check{x} - x_s,y_s) \vert \leq \sigma_f \left( \Vert x - \check{x} \Vert \right).
		\end{equation}
	\end{enumerate}
\end{assumption}
\begin{remark}
	The function $V_f(x-x_s,y_s)$ is not required to be continuous with respect to $y_s$. This allows the designer to choose different terminal cost functions for different equilibria or groups of equilibria.
	Nonetheless, Assumption \ref{ass:rnmpct_tracking_bounds} requires the functions $V_f(x-x_s,y_s)$ to share an upper bound and a function $\sigma_f(\cdot)$ defining a common uniform continuity property. Of course, when it is possible, a simple way to verify such conditions is to use a unique terminal cost function for all the feasible setpoints $y_s \in \mathcal{Y}_t$. An algorithm for the design of the pre-stabilizing control laws $\pi(\cdot,\cdot,\cdot)$, the terminal cost functions $V_f(\cdot,\cdot)$ and the robust invariant set for tracking $\mathcal{T}$ is provided in the Appendix \ref{alg:terminal_ingredients}.
	
	If the nonlinear system under control can be described with an exact quasi-LPV realisation based on some scheduling parameters $\xi(t)$, then control laws $\pi(\cdot,\cdot,\cdot)$ and terminal ingredients $\big( V_f(\cdot,\cdot),\mathcal{T} \big)$ verifying the assumptions can be designed following LMI-based methods like those used in \cite{Kohler2020qlpv,Morato2021qlpv}. 
\end{remark}
In order to ensure that the closed-loop system output converges to a neighborhood of the best admissible steady output, it is fundamental that the offset function $V_O(\cdot)$ and the set $\mathcal{Y}_t$ verify the following conditions.
\begin{assumption}[Properties of $V_O(\cdot)$ and $\mathcal{Y}_t$]
	\label{ass:rnmpct_offset_function}
	\begin{enumerate}
		\item The set of feasible setpoints $\mathcal{Y}_t$ is a convex and bounded subset of $\mathcal{Y}_{\pi,s}$.
		\item The offset cost function $V_O: \mathbb{R}^p \to \mathbb{R}$ is a convex positive definite function such that the minimizer
		\begin{equation}
			\label{eq:ys_choice}
			y_s^*=\arg \min_{y_s \in \mathcal{Y}_t} V_O(y_s-y_t)
		\end{equation}
		is unique. Moreover, there exists a $\mathcal{K}_\infty$ function $\alpha_O(\cdot)$ such that
		\begin{equation}
			V_O(y_s-y_t) - V_O(y_s^* - y_t) \geq \alpha_O (\Vert y_s - y_s^* \Vert) \; \forall y_s \in \mathcal{Y}_t.
		\end{equation}
	\end{enumerate}
\end{assumption}
\begin{remark}
	Assumption \ref{ass:rnmpct_offset_function} requires the boundedness of the set $\mathcal{Y}_t$. This can be obtained by either having an output function $h(\cdot,\cdot)$ that is bounded in $\mathcal{Z}$, or by accepting to let the MPC steer the closed-loop output to setpoints belonging to a bounded set $\mathcal{Y}_t$ even when $\mathcal{Y}_s$ has no bounds. In case of a compact constraint set $\mathcal{Z}$, $\mathcal{Y}_s$ is bounded, therefore allowing the designer to choose $\mathcal{Y}_t = \mathcal{Y}_s$ if $\mathcal{Y}_s$ is also convex.
	
	Moreover, given that the convexity of $\mathcal{Y}_t$ implies that $\mathcal{Y}_t$ is also connected, in case of systems whose output equilibria belong to disjoint regions, it will be necessary to set $\mathcal{Y}_t$ as a convex subset of one of those regions, thus impeding to stabilize all the equilibria of the system.
\end{remark}
Assumption \ref{ass:rnmpct_offset_function} implies that, given the boundedness of $\mathcal{Y}_t$, there exists $V_{O,\mathrm{max}} > 0$ such that $V_O( y_s - \check{y}_s) \leq V_{O,\mathrm{max}}$ for any $y_s,\check{y}_s \in \mathcal{Y}_t$.

For reasons that will become more clear later in the text, in order to prove the input-to-state stability of the closed-loop system, it is necessary to make a further (non-restrictive) assumption on the function $\alpha_O(\cdot)$. For the sake of simplicity, the next assumption makes use of the set $\mathcal{S}$, which is defined as
\begin{equation}
	\mathcal{S} \coloneqq \left\{ \Vert y_s - \check{y}_s \Vert: y_s,\check{y}_s \in \mathcal{Y}_t \right\}.
\end{equation}
Note that, due to the boundedness of $\mathcal{Y}_t$, $\mathcal{S} \subset \mathbb{R}$ is a bounded interval of real numbers, in detail $\mathcal{S} \equiv [0, \sup \Vert y_s - \check{y}_s \Vert]$, with $y_s,\check{y}_s \in \mathcal{Y}_t$. 
\begin{assumption}[Joint conditions on $V_f(\cdot,\cdot)$ and $V_O(\cdot)$]
	\label{ass:rnmpct_alpha_O_alpha_f_ratio}
	Given the Lipschitz constant $L_g$, for which $\Vert x_s - \check{x}_s \Vert \leq L_g \Vert y_s - \check{y}_s \Vert$, the functions $V_f(\cdot,\cdot)$ and $V_O(\cdot)$ are such that
	\begin{equation}
		\label{eq:l1}
		\begin{split}
			\mathfrak{b}_1 &\coloneqq \mathrm{inf} \Bigg\{ \frac{\alpha_O(s)}{4 \bar{\alpha}_f L_g^2 s^2}: \; s \in \mathcal{S} \setminus \{ 0 \} \Bigg\} > 0,
		\end{split}
	\end{equation}
	\begin{equation}
		\label{eq:l2}
		\begin{split}
			\mathfrak{b}_2 &\coloneqq \mathrm{inf} \Bigg\{ \frac{\alpha_O(s_2)   -   \alpha_O(s_1)}{4 \bar{\alpha}_f L_g^2 s_2^2   -   4 \bar{\alpha}_f L_g^2 s_1^2}:   s_1,s_2 \in \mathcal{S}, s_2 > s_1   \Bigg\} > 0.
		\end{split}
	\end{equation}
\end{assumption}
\begin{remark}
	Thanks to the boundedness of $\mathcal{S}$, Assumption \ref{ass:rnmpct_alpha_O_alpha_f_ratio} is not particularly restrictive, as it can easily be verified by choosing an offset function $V_O(\cdot)$ such that, for $s \to 0^+$, $\alpha_O(s)$ goes to 0 at most as rapidly as $s^2$. 
	Examples of functions that verify the last conditions are quadratic functions and polynomials of degree equal to or lower than 2.
\end{remark}
In the following, we analyze how assumptions \ref{ass:rnmpct_continuity}-\ref{ass:rnmpct_alpha_O_alpha_f_ratio} are sufficient to ensure the recursive feasibility and the stabilizing capabilities of our robust MPC for tracking for perturbed nonlinear systems. Given that the proof of stability of the closed loop is complex, we first address recursive feasibility in Lemma \ref{lem:rnmpct_robust_feasibility}.
\begin{lemma}[Recursive feasibility of the robust MPC for tracking for nonlinear systems]
	\label{lem:rnmpct_robust_feasibility}
	Consider the system $x(k+1) = f \big( x(k), \pi ( y_s, \allowbreak x(k),v(k) ),w(k) \big)$ and the set sequence $\big\{\mathcal{Z}_\pi(j)\big\}_{j \geq 0}$, based on $\big\{\mathcal{R}(j)\big\}_{j \geq 0}$ and $\big\{\mathcal{F}(j)\big\}_{j \geq 0}$, which satisfy Assumptions \ref{ass:rnmpct_F_sequence} and \ref{ass:rnmpct_R_sequence}. Let the triplet $\big( v_f(y_s^0),\Omega(y_s^0),\mathcal{T} \big)$ fulfill Assumption \ref{ass:rnmpct_robust_terminal_set}. Consider now a feasible state $x(k) \in \mathcal{X}_{N_p}$ and the optimal solution $\big(\mathbf{v}^0(k),y_s^0(k)\big)$ to the problem $P_{N_p}(x(k),y_t)$. Let $x(k+1)$ be the next uncertain state and define a candidate artificial reference $y_s^+(k+1) = y_s^0(k)$ and a sequence of inputs $\mathbf{v}^+(k+1) \coloneqq \{v^+(0 \vert k+1), \hdots,v^+(N_p-1 \vert k+1)\} = \{ v^0(1 \vert k),\hdots,v^0(N_p-1 \vert k),v_f(y_s^+(k+1)) \}$. Then the following properties hold:
	\begin{enumerate}
		\item $\big( \phi_\pi(j;y_s^+(k+1),x(k+1),\mathbf{v}^+(k+1),\mathbf{0}),v^+(j \vert k+1) \big) \in \mathcal{Z}_\pi(j)$, $\forall j \in [0,N_p-1]$,
		\item $\big( \phi_\pi(N_p;y_s^+(k+1),x(k+1),\mathbf{v}^+(k+1),\mathbf{0}),y_s^+(k+1) \big) \in \mathcal{T}$.
	\end{enumerate} 
\end{lemma}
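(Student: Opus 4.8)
The plan is to compare the candidate nominal trajectory launched from the perturbed state $x(k+1)$ with the one-step shift of the optimal nominal trajectory computed at time $k$, and to show that the deviation between the two is absorbed by the sets $\mathcal{F}(j)$, which are in turn compensated by the extra tightening built into $\{\mathcal{Z}_\pi(j)\}_{j\geq 0}$. Throughout I write $\hat{x}(j\vert k) \coloneqq \phi_\pi(j;y_s^0(k),x(k),\mathbf{v}^0(k),\mathbf{0})$ for the optimal nominal prediction at time $k$ and $\hat{x}^+(j\vert k+1) \coloneqq \phi_\pi(j;y_s^+(k+1),x(k+1),\mathbf{v}^+(k+1),\mathbf{0})$ for the candidate prediction, and recall that $\mathcal{Z}_\pi(j)=\mathcal{Z}_\pi\ominus(\mathcal{R}(j)\times\{0\})$ with $\mathcal{R}(0)=\{0\}$.

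First I would bound the mismatch produced by a single step of the real system. Since the applied input is $\pi(y_s^0(k),x(k),v^0(0\vert k))$ in both cases, the states $x(k+1)=f\big(x(k),\pi(y_s^0(k),x(k),v^0(0\vert k)),w(k)\big)$ and $\hat{x}(1\vert k)=f\big(x(k),\pi(y_s^0(k),x(k),v^0(0\vert k)),0\big)$ differ only through $w(k)$; Assumption \ref{ass:rnmpct_continuity} then gives componentwise $\vert x_i(k+1)-\hat{x}_i(1\vert k)\vert\leq\sum_{c=1}^r\sigma_{w,ic}(\vert w_c(k)\vert)\leq\sum_{c=1}^r\sigma_{w,ic}(\bar{w}_c)$, i.e. $x(k+1)-\hat{x}(1\vert k)\in\mathcal{F}(0)$ by the definition of $\mathcal{F}(0)$. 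Now the candidate trajectory $\hat{x}^+(\cdot\vert k+1)$ and the shifted optimal trajectory $\hat{x}(\cdot+1\vert k)$ are driven by the identical input subsequence $\{v^0(1\vert k),\dots,v^0(N_p-1\vert k)\}$ and the identical reference $y_s^0(k)$, and differ only in their initial conditions, which lie within $\mathcal{F}(0)$ of each other. Assumption \ref{ass:rnmpct_F_sequence} propagates this gap to $\hat{x}^+(j\vert k+1)\in\hat{x}(j+1\vert k)\oplus\mathcal{F}(j)$ for all $j\in[0,N_p-1]$.

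Next I would establish property 1 by splitting the horizon. For $j\in[0,N_p-2]$ the candidate input is $v^+(j\vert k+1)=v^0(j+1\vert k)$ and optimality at time $k$ gives $(\hat{x}(j+1\vert k),v^0(j+1\vert k))\in\mathcal{Z}_\pi(j+1)$. Writing $\hat{x}^+(j\vert k+1)=\hat{x}(j+1\vert k)+e$ with $e\in\mathcal{F}(j)$ and taking an arbitrary $r\in\mathcal{R}(j)$, the nesting $\mathcal{F}(j)\oplus\mathcal{R}(j)\subseteq\mathcal{R}(j+1)$ of Assumption \ref{ass:rnmpct_R_sequence} yields $e+r\in\mathcal{R}(j+1)$, so that $(\hat{x}^+(j\vert k+1)+r,v^0(j+1\vert k))=(\hat{x}(j+1\vert k)+(e+r),v^0(j+1\vert k))\in\mathcal{Z}_\pi$; as $r\in\mathcal{R}(j)$ was arbitrary, this is precisely $(\hat{x}^+(j\vert k+1),v^+(j\vert k+1))\in\mathcal{Z}_\pi(j)$, the case $j=0$ being included since $\mathcal{R}(0)=\{0\}$. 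For the last stage $j=N_p-1$, the terminal constraint at time $k$ places $\hat{x}(N_p\vert k)\in\mathcal{X}_f(y_s^0(k))$, whence condition 4 of Assumption \ref{ass:rnmpct_robust_terminal_set} gives $\hat{x}^+(N_p-1\vert k+1)\in\mathcal{X}_f(y_s^0(k))\oplus\mathcal{F}(N_p-1)\subseteq\Omega(y_s^0(k))$, and condition 2 then delivers $(\hat{x}^+(N_p-1\vert k+1),v_f(y_s^0(k)))\in\mathcal{Z}_\pi(N_p-1)$.

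Finally, property 2 follows from the invariance of the terminal ingredients: with $\hat{x}^+(N_p-1\vert k+1)\in\Omega(y_s^0(k))$ in hand, condition 5 of Assumption \ref{ass:rnmpct_robust_terminal_set} gives $\hat{x}^+(N_p\vert k+1)\in\mathcal{X}_f(y_s^0(k))$, while $y_s^+(k+1)=y_s^0(k)\in\mathcal{Y}_t$ carries over from feasibility at time $k$; together these place $(\hat{x}^+(N_p\vert k+1),y_s^+(k+1))\in\mathcal{T}$. I expect the crux to be the middle step: showing that the propagated initial-condition error $\mathcal{F}(j)$ is reabsorbed exactly by the difference in tightening between stage $j+1$ and stage $j$, which forces the combination of the nesting condition $\mathcal{F}(j)\oplus\mathcal{R}(j)\subseteq\mathcal{R}(j+1)$ with the Minkowski-difference definition of $\mathcal{Z}_\pi(j)$ to be done with care. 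A secondary point needing attention is the justification of $\hat{x}(N_p\vert k)\in\mathcal{X}_f(y_s^0(k))$ from the terminal membership $(\hat{x}(N_p\vert k),y_s^0(k))\in\mathcal{T}$, which ties the terminal state to the single reference used along the whole horizon.
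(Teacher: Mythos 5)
Your proof is correct and follows essentially the same route as the paper's: you compare the candidate trajectory from the perturbed state with the one-step-shifted optimal trajectory, absorb the initial mismatch into $\mathcal{F}(0)$, propagate it via Assumption \ref{ass:rnmpct_F_sequence}, reabsorb it into the tightening via $\mathcal{F}(j)\oplus\mathcal{R}(j)\subseteq\mathcal{R}(j+1)$, and invoke conditions 2, 4 and 5 of Assumption \ref{ass:rnmpct_robust_terminal_set} for the last stage and the terminal constraint. If anything, your element-wise Minkowski argument and your separate treatment of stage $j=N_p-1$ are slightly more explicit than the paper's, which handles that stage only implicitly through the terminal ingredients.
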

\begin{proof}
	The proof is provided in the Appendix \ref{sec:app_proofs}.
\end{proof}
The following theorem ensures the input-to-state stability of the closed-loop system controlled by the proposed robust MPC for tracking and convergence of the system output to a neighborhood of the best admissible steady output.
\begin{theorem}[Stability of the tube-based MPC for tracking for perturbed nonlinear systems]
	\label{thm:rnmpct_iss_convergence}
	Suppose that Assumptions \ref{ass:rnmpct_continuity}-\ref{ass:rnmpct_alpha_O_alpha_f_ratio} hold, and consider a given constant setpoint $y_t$. Then for any feasible initial state $x \in \mathcal{X}_{N_p}$, the system controlled by $\kappa_{N_p}(x,y_t)$ fulfills the constraints throughout the time, the equilibrium $x_s^* = g_x(y_s^*)$ is input-to-state stable for the closed-loop system, whose output converges to a bounded set containing the best admissible steady output $y_s^*$.
\end{theorem}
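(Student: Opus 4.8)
The plan is to use the optimal value function $V_{N_p}^0(x,y_t)$ as an ISS-Lyapunov function for the closed loop, in the spirit of the value-function arguments of the cited works, but with the perturbation entering only through a one-step mismatch that I control with the tube sets $\{\mathcal{F}(j)\}$, $\{\mathcal{R}(j)\}$ and the continuity moduli of Assumptions \ref{ass:rnmpct_continuity} and \ref{ass:rnmpct_tracking_bounds}. Constraint satisfaction is immediate: by Lemma \ref{lem:rnmpct_robust_feasibility} the shifted candidate $(\mathbf{v}^+,y_s^+)$ is feasible for $P_{N_p}(x(k+1),y_t)$, so the problem is recursively feasible along the real trajectory, and since $\mathcal{R}(0)=\{0\}$ gives $\mathcal{Z}_\pi(0)=\mathcal{Z}_\pi$, the optimizer satisfies $(x(k),v^0(0\vert k))\in\mathcal{Z}_\pi$; by the definition \eqref{eq:rnmpct_new_system_constraints} of $\mathcal{Z}_\pi$ this forces $\big(x(k),\kappa_{N_p}(x(k),y_t)\big)\in\mathcal{Z}$ for all $k$, proving the first claim.

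Next I would derive the descent inequality. Using the Lemma \ref{lem:rnmpct_robust_feasibility} candidate as a suboptimal input for $x(k+1)$, with $y_s^+=y_s^0(k)$ so that the offset term $V_O$ cancels, I compare the candidate cost with $V_{N_p}^0(x(k),y_t)$. The real successor $x(k+1)$ deviates from the nominal $\hat{x}(1\vert k)$ only through $w(k)$, and Assumption \ref{ass:rnmpct_continuity} bounds this mismatch componentwise by $\sum_c\sigma_{w,ic}(\vert w_c(k)\vert)$, placing it in $\mathcal{F}(0)$; Assumption \ref{ass:rnmpct_F_sequence} then keeps the candidate nominal trajectory within $\mathcal{F}(j)$ of the shifted optimal trajectory. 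Reindexing the stage costs, the first stage $\ell\big(x(k)-x_s^0(k),v^0(0\vert k)-v_s^0(k)\big)$ is released, the last stage together with the terminal cost is absorbed by the terminal CLF inequality \eqref{eq:rnmpct_lyap_ineq} (applicable because the terminal-minus-one state lies in $\mathcal{X}_f(y_s)\oplus\mathcal{F}(N_p-1)\subseteq\Omega(y_s)$ by Assumption \ref{ass:rnmpct_robust_terminal_set}), and every remaining term is a difference of $\ell$ or $V_f$ evaluated at points separated by an element of some $\mathcal{F}(j)$. Bounding these through the moduli $\sigma_{\ell,x}$ and $\sigma_f$ of Assumption \ref{ass:rnmpct_tracking_bounds}, I obtain a $\mathcal{K}$ function $\rho(\Vert w(k)\Vert)$ vanishing as $w\to0$, hence
\[
V_{N_p}^0(x(k+1),y_t)-V_{N_p}^0(x(k),y_t)\le -\alpha_\ell\big(\Vert x(k)-x_s^0(k)\Vert\big)+\rho\big(\Vert w(k)\Vert\big),
\]
an ISS-type decrease toward the \emph{current} artificial equilibrium $x_s^0(k)$.

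I would then establish convergence of the artificial reference to $y_s^*$. Together with the lower bound of Assumption \ref{ass:rnmpct_tracking_bounds}(1) and a local upper bound on $V_{N_p}^0$ built from the quadratic terminal bound of Assumption \ref{ass:rnmpct_tracking_bounds}(2) and the Lipschitz maps $g_x,g_v$, the descent drives the state into a neighborhood of $x_s^0(k)$. The decisive step is to exclude $y_s^0\neq y_s^*$ at optimality: perturbing the optimal artificial reference along the segment toward $y_s^*$, which is admissible by convexity of $\mathcal{Y}_t$ (Assumption \ref{ass:rnmpct_offset_function}), makes the offset term drop by at least $\alpha_O(\Vert y_s^0-y_s^*\Vert)$ while the incurred tracking cost grows at most like $\bar\alpha_f L_g^2\Vert\cdot\Vert^2$; the ratio conditions $\mathfrak{b}_1,\mathfrak{b}_2$ of Assumption \ref{ass:rnmpct_alpha_O_alpha_f_ratio} guarantee the net change is strictly negative, contradicting optimality unless the reference has converged to $y_s^*$, so that $x_s^0(k)\to x_s^*=g_x(y_s^*)$.

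Finally I would splice the two pieces: the decrease confines $x(k)-x_s^0(k)$ to a residual set sized by $\rho(\cdot)$, the reoptimization argument forces $y_s^0(k)\to y_s^*$, and a standard ISS-Lyapunov comparison then yields the $\mathcal{KL}+\mathcal{K}$ estimate on $\Vert x(k)-x_s^*\Vert$ and convergence of the output to a bounded set containing $y_s^*$. I expect the main obstacle to be precisely the interplay between robustness and reference optimization in this last step: the perturbation term $\rho$ must be kept a genuine $\mathcal{K}$ function of the \emph{actual} $\Vert w(k)\Vert$, rather than of the worst-case $\bar{w}$ baked into the tubes, so that stability degrades gracefully to zero with the disturbance, while simultaneously ensuring that the strict cost improvement delivered by Assumption \ref{ass:rnmpct_alpha_O_alpha_f_ratio} near the equilibrium is not swamped by these same error terms.
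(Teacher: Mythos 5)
Your first two blocks match the paper: constraint satisfaction and recursive feasibility follow from Lemma \ref{lem:rnmpct_robust_feasibility} exactly as you describe, and the shifted-candidate argument with the terminal CLF inequality \eqref{eq:rnmpct_lyap_ineq} and the moduli $\sigma_{\ell,x},\sigma_f$ applied to the $\mathcal{F}(j)$-bounded mismatches is precisely how the paper obtains $\Delta V_{N_p}^0 \leq -\alpha_\ell(\Vert x(k)-x_s^0(k)\Vert)+\lambda_1(\Vert w(k)\Vert)$, with the perturbation entering through the actual $w(k)$ via the functions $c_{i,j}(w(k))$ rather than the worst-case $\bar w$ -- so the concern you raise at the end is real but is handled by the same construction you already set up.

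The genuine gap is in your third and fourth blocks. The architecture ``descent drives $x(k)$ near $x_s^0(k)$, then a contradiction-with-optimality argument forces $y_s^0(k)\to y_s^*$, then splice'' is the structure of the \emph{nominal} tracking proof and does not deliver the ISS claim of the theorem. First, the contradiction argument is not available pointwise: at a fixed $k$ with $x(k)\neq x_s^0(k)$, moving $y_s$ toward $y_s^*$ changes the tracking portion of the cost and may destroy feasibility of the predicted trajectory, so no contradiction with joint optimality of $(\mathbf{v}^0,y_s^0)$ arises; the improved reference must instead be deployed as a \emph{candidate at the successor step}, evaluated at $\hat{x}^0(1\vert k)$, and its feasibility has to be earned (this is why the paper needs $x(k)\in\mathcal{X}_f(y_s^0(k))$ via $\gamma\leq\gamma_1$, the strict interiority $\alpha_\ell(\Vert\hat{x}^0(1\vert k)-x_s^0(k)\Vert)\leq\epsilon_x/2$ via $\gamma\leq\gamma_2$, and $\beta\geq\beta_1$ to keep $\hat{x}^0(1\vert k)\in\mathcal{X}_f(\tilde y_s(k+1))$). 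Second, under persistent perturbations $y_s^0(k)$ does not converge to $y_s^*$, so ``$x_s^0(k)\to x_s^*$'' is false in general and the final ``standard ISS-Lyapunov comparison'' has nothing to act on: ISS of $x_s^*$ requires a \emph{single} Lyapunov function with a decrease rate in $\Vert x(k)-x_s^*\Vert$, not two asymptotic statements glued together. The missing idea is the paper's choice $W(x,y_t)=V_{N_p}^0(x,y_t)-V_O(y_s^*-y_t)$ together with the threshold dichotomy $\alpha_\ell(\Vert x(k)-x_s^0(k)\Vert)\gtrless\gamma\,\alpha_O(\Vert y_s^0(k)-y_s^*\Vert)$: in the large-error case the $x_s^0$-descent already dominates and converts into $-\tfrac12\min\{1,\gamma\}\underline\alpha(\Vert x(k)-x_s^*\Vert)$, while in the small-error case the quantitative margin of Assumption \ref{ass:rnmpct_alpha_O_alpha_f_ratio} turns the offset decrease minus the quadratic terminal-cost increase into a genuine $\mathcal{K}_\infty$ bound $\alpha_\beta(\Vert y_s^0(k)-y_s^*\Vert)$ (with $\gamma\leq\gamma_3$ absorbing the residual tracking term), and only the combination of the two cases yields $\Delta W\leq\lambda(\Vert w(k)\Vert)-\alpha(\Vert x(k)-x_s^*\Vert)$. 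Without this case split and the explicit constants $\gamma_1,\gamma_2,\gamma_3,\beta_1,\beta_2$, the interplay you correctly flag as ``the main obstacle'' remains unresolved.
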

\begin{proof}
	The proof is provided in the Appendix \ref{sec:app_proofs}.
\end{proof}

\section{Case study}
\label{sec:case_study}
In this section, the here-proposed controller is tested on a simulated four-tank system with uncertain parameters. In detail, considering a sampling time $T_s = 15$s, the system under control is the fourth-order Runge Kutta discretization of the continuous-time system described by the equations
\begin{subequations}
	\begin{align}
		\frac{dh_1(t)}{dt} &= -\frac{a_1}{S}\sqrt{2gh_1(t)} + \frac{a_3}{S}\sqrt{2gh_3(t)} + \frac{\gamma_a}{3600S}q_a(t),\\
		\frac{dh_2(t)}{dt} &= -\frac{a_2}{S}\sqrt{2gh_2(t)} + \frac{a_4}{S}\sqrt{2gh_4(t)} + \frac{\gamma_b}{3600S}q_b(t),\\
		\frac{dh_3(t)}{dt} &= -\frac{a_3}{S}\sqrt{2gh_3(t)} + \frac{(1 - \gamma_b)}{3600S}q_b(t),\\
		\frac{dh_4(t)}{dt} &= -\frac{a_4}{S}\sqrt{2gh_4(t)} + \frac{(1 - \gamma_a)}{3600S}q_a(t),
	\end{align}
\end{subequations}
where:
\begin{itemize}
	\item $h_1,h_2,h_3,h_4 \; \mathrm{[m]}$ are the water levels in the four tanks, which constitute the state of the system;
	\item $q_a,q_b \; \mathrm{[m^3/h]}$ are the flows of the two pumps of the system, and are the system inputs;
	\item $\gamma_a,\gamma_b \; [/]$ are the parameters that define how the two three-way valves operate;
	\item $a_1,a_2,a_3,a_4 \; \mathrm{[m^2]}$ are the sections of the tank holes, which determine the discharge speed of the four tanks;
	\item $S \; \mathrm{[m^2]}$ is the cross-section of all the tanks;
	\item $g = 9.81 \; \mathrm{[m/s^2]}$ is the standard acceleration of gravity.
\end{itemize} 
The output of the system is $y(t) = (h_1(t),h_2(t))$. In the context of our work, we consider the parameters $\gamma_a,\gamma_b$ to be uncertain, namely
\begin{equation}
	\gamma_a = \bar{\gamma}_a + w_1, \quad
	\gamma_b = \bar{\gamma}_b + w_2,
\end{equation}
where $\bar{\gamma}_a$ and $\bar{\gamma}_b$ are the nominal values of the parameters $\gamma_a$ and $\gamma_b$, while $w_1$ and $w_2$ describe the uncertainties of such parameters and are supposed to be bounded. In detail,
\begin{equation}
	\label{eq:rnmpct_4t_w}
	\vert w_1 \vert \leq \bar{w}_1, \quad
	\vert w_2 \vert \leq \bar{w}_2.
\end{equation}
The system inputs and states have to verify the following constraints:
\begin{subequations}
	\label{eq:rnmpct_4t_constraints}
	\begin{align}
		h_{i,\mathrm{min}} &\leq h_i \leq h_{i,\mathrm{max}}, \; i \in [1,4],\\
		q_{j,\mathrm{min}} &\leq q_j \leq q_{j,\mathrm{max}}, \; j \in \{a,b\}.
	\end{align}
\end{subequations}
The parameters of the simulated plant are taken from \cite{Limon2018_NMPCT} while the bounds of the uncertainty variables $w_1$ and $w_2$ are chosen as $\bar{w}_1 = \bar{w}_2 = 0.005$.
The control goal is to track a piece-wise constant reference signal, starting from the initial condition $(h_1(0),h_2(0),h_3(0),h_4(0)) = (0.2837,0.2943,0.2168,0.2864)\mathrm{m}$. 

In the following, we revert to a more generic notation and define the states and inputs of the discretized system as $x(k) = (x_1(k),x_2(k),x_3(k),x_4(k)) \coloneqq (h_1(k\cdot T_s),h_2(k\cdot T_s),h_3(k\cdot T_s),h_4(k\cdot T_s))$ and $u(k) = (u_1(k),u_2(k)) \coloneqq (q_a(k\cdot T_s),q_a(k\cdot T_s))$. Moreover, we define the sets $\mathcal{X}$, $\mathcal{U}$ and $\mathcal{W}$ as sets such that $x(k) \in \mathcal{X}$, $u(k) \in \mathcal{U}$ and $w(k) \coloneqq (w_1(k\cdot T_s),w_2(k\cdot T_s)) \in \mathcal{W}$ imply the verification of the constraints \eqref{eq:rnmpct_4t_w} and \eqref{eq:rnmpct_4t_constraints}. To reach the previously stated control goal, we employ a control law 
\begin{equation}
	\pi \left( y_s^0(k),x(k),v^0(0 \vert k) \right) = K (x(k) - g_x(y_s^0(k))) + v^0(0 \vert k),
\end{equation}
where $y_s^0(k)$ and $v^0(0 \vert k)$ are obtained by solving the optimal control problem $P_{N_p}(x(k),y_t)$, with $N_p = 4$, employing a quadratic stage cost function $\ell(x - xs,v - v_s) = \big( x - x_s \big)^\top Q \big( x - x_s \big) + \big( v - v_s \big)^\top R \big( v - v_s \big)$, a quadratic terminal cost function $V_f(x - x_s,y_s) = (x - x_s)^\top P (x - x_s)$, and a quadratic offset cost function $V_O(y_s - y_t) = (y_s - y_t)^\top T (y_s - y_t)$. In detail, $Q = \mathrm{diag}(5,2.5,1,1)$, $R = 0.01I_2$ and $T = 10^4I_2$. 

The matrices $K$ and $P$ are obtained using the method proposed in the Appendix~\ref{alg:terminal_ingredients}. In detail, the nonlinear system under control is locally modelled as an LTV system, and matrices $K$ and $P$ verifying the Lyapunov inequality \eqref{eq:rnmpct_lyap_ineq} can be computed by resorting to LMIs. The obtained matrices are:
\begin{equation}
	K = \begin{bmatrix}
		-0.2117	& -1.0663 & 1.0632 & -2.2400\\
		-2.9453 & 0.3358 & -2.9568 & 1.5166
	\end{bmatrix},
\end{equation}
\begin{equation}
	P = \begin{bmatrix}
		23.7301 & -4.6424 & 10.2060 & -8.5938\\
		-4.6425 & 8.8324 & -4.7139 & 3.8709\\
		10.2060 & -4.7139 & 12.6887 & -7.8934\\
		-8.5938 & 3.8709 & -7.8934 & 9.4075
	\end{bmatrix}.
\end{equation}
Given that $\mathcal{X}$, $\mathcal{U}$ and $\mathcal{W}$ are compact and that the system under control is Lipschitz in $\mathcal{X} \times \mathcal{U} \times \mathcal{W}$, the system equations are also component-wise Lipschitz continuous in such set. Therefore, we first compute the component-wise Lipschitz continuity constants of the discretized system and then compute the sequences $\left\{ \mathcal{F}(j) \right\}_{\geq 0}$ and $\left\{ \mathcal{R}(j) \right\}_{\geq 0}$ following the algorithms presented in the Appendices \ref{alg:lipschitz} and \ref{alg:F_R}. The found component-wise Lipschitz constants are reported in Tables \ref{tab:rnmpct_Lx}, \ref{tab:rnmpct_Lu} and \ref{tab:rnmpct_Lw}.
\begin{table}
	\centering
	\caption{Values of the constants $L_{x,ia}$ of the four-tank system.}
	\label{tab:rnmpct_Lx}
	\arrayrulecolor{black}
	\begin{tabular}{|c|c|c|c|c|l|} 
		\hline
		\rowcolor{myblue!10} \multicolumn{2}{|c|}{{\cellcolor{myblue!10}}}                               & \multicolumn{4}{c|}{$a$}                                                                                                   \\ 
		\hhline{|>{\arrayrulecolor{myblue!10}}-->{\arrayrulecolor{black}}----|}
		\rowcolor{myblue!10} \multicolumn{2}{|c|}{\multirow{-2}{*}{{\cellcolor{myblue!10}}$L_{x,ia}$}}   & \textit{1}                  & \textit{2}                  & \textit{3}                  & \multicolumn{1}{c|}{\textit{4}}  \\ 
		\hline
		{\cellcolor{myblue!10}}                      & {\cellcolor{myblue!10}}\textit{1}                 & 0.9388                      & 0.0250                      & 0.1375                      & 0.0500                           \\ 
		\hhline{|>{\arrayrulecolor{myblue!10}}->{\arrayrulecolor{black}}-----|}
		{\cellcolor{myblue!10}}                      & {\cellcolor{myblue!10}}\textit{2}                 & 0.0850                      & 0.9388                      & 0.0795                      & 0.1600                           \\ 
		\hhline{|>{\arrayrulecolor{myblue!10}}->{\arrayrulecolor{black}}-----|}
		{\cellcolor{myblue!10}}                      & {\cellcolor{myblue!10}}\textit{3}                 & 0.1225                      & 0.0145                      & 0.8375                      & 0.0750                           \\ 
		\hhline{|>{\arrayrulecolor{myblue!10}}->{\arrayrulecolor{black}}-----|}
		\multirow{-4}{*}{{\cellcolor{myblue!10}}$i$} & {\cellcolor{myblue!10}}\textit{4}                 & \multicolumn{1}{l|}{0.0125} & \multicolumn{1}{l|}{0.0600} & \multicolumn{1}{l|}{0.0600} & 0.8500                           \\
		\hline
	\end{tabular}
\end{table}

\begin{table}
	\centering
	\caption{Values of the constants $L_{v,ib}$ of the four-tank system.}
	\label{tab:rnmpct_Lu}
	\arrayrulecolor{black}
	\begin{tabular}{|c|c|c|c|} 
		\hline
		\rowcolor{myblue!10} \multicolumn{2}{|c|}{{\cellcolor{myblue!10}}}                               & \multicolumn{2}{c|}{$b$}                                   \\ 
		\hhline{|>{\arrayrulecolor{myblue!10}}-->{\arrayrulecolor{black}}--|}
		\rowcolor{myblue!10} \multicolumn{2}{|c|}{\multirow{-2}{*}{{\cellcolor{myblue!10}}$L_{v,ib}$}}   & \textit{1}                  & \textit{2}                   \\ 
		\hline
		{\cellcolor{myblue!10}}                      & {\cellcolor{myblue!10}}\textit{1}                 & 0.0225                      & 0.0250                       \\ 
		\hhline{|>{\arrayrulecolor{myblue!10}}->{\arrayrulecolor{black}}---|}
		{\cellcolor{myblue!10}}                      & {\cellcolor{myblue!10}}\textit{2}                 & 0.0500                      & 0.0350                       \\ 
		\hhline{|>{\arrayrulecolor{myblue!10}}->{\arrayrulecolor{black}}---|}
		{\cellcolor{myblue!10}}                      & {\cellcolor{myblue!10}}\textit{3}                 & 0.0100                      & 0.1700                       \\ 
		\hhline{|>{\arrayrulecolor{myblue!10}}->{\arrayrulecolor{black}}---|}
		\multirow{-4}{*}{{\cellcolor{myblue!10}}$i$} & {\cellcolor{myblue!10}}\textit{4}                 & \multicolumn{1}{l|}{0.1500} & \multicolumn{1}{l|}{0.0100}  \\
		\hline
	\end{tabular}
\end{table}

\begin{table}
	\centering
	\caption{Values of the constants $L_{w,ic}$ of the four-tank system.}
	\label{tab:rnmpct_Lw}
	\arrayrulecolor{black}
	\begin{tabular}{|c|c|c|c|} 
		\hline
		\rowcolor{myblue!10} \multicolumn{2}{|c|}{{\cellcolor{myblue!10}}}                               & \multicolumn{2}{c|}{$c$}                                   \\ 
		\hhline{|>{\arrayrulecolor{myblue!10}}-->{\arrayrulecolor{black}}--|}
		\rowcolor{myblue!10} \multicolumn{2}{|c|}{\multirow{-2}{*}{{\cellcolor{myblue!10}}$L_{w,ic}$}}   & \textit{1}                  & \textit{2}                   \\ 
		\hline
		{\cellcolor{myblue!10}}                      & {\cellcolor{myblue!10}}\textit{1}                 & 0.2400                      & 0.0125                       \\ 
		\hhline{|>{\arrayrulecolor{myblue!10}}->{\arrayrulecolor{black}}---|}
		{\cellcolor{myblue!10}}                      & {\cellcolor{myblue!10}}\textit{2}                 & 0.0088                      & 0.2638                       \\ 
		\hhline{|>{\arrayrulecolor{myblue!10}}->{\arrayrulecolor{black}}---|}
		{\cellcolor{myblue!10}}                      & {\cellcolor{myblue!10}}\textit{3}                 & 0.00006                     & 0.2675                       \\ 
		\hhline{|>{\arrayrulecolor{myblue!10}}->{\arrayrulecolor{black}}---|}
		\multirow{-4}{*}{{\cellcolor{myblue!10}}$i$} & {\cellcolor{myblue!10}}\textit{4}                 & \multicolumn{1}{l|}{0.2450} & \multicolumn{1}{l|}{0.0125}  \\
		\hline
	\end{tabular}
\end{table}

Furthermore, as the computed sets $\mathcal{F}(j)$ and $\mathcal{R}(j)$ are box-shaped, i.e.
\begin{equation}
	\mathcal{F}(j) = \{ x \in \mathbb{R}^4: \; \vert x_i \vert \leq \bar{x}_{i,\mathcal{F}(j)}, \forall i \in [1,4] \},
\end{equation}
\begin{equation}
	\mathcal{R}(j) = \{ x \in \mathbb{R}^4: \; \vert x_i \vert \leq \bar{x}_{i,\mathcal{R}(j)}, \forall i \in [1,4] \},
\end{equation}
we provide the computed $\bar{x}_{i,\mathcal{F}(j)}$ and $\bar{x}_{i,\mathcal{R}(j)}$ in Table \ref{tab:rnmpct_F_R_values}.
\begin{table}
	\centering
	\caption{Description of the sequences $\{ \mathcal{F}(j) \}_{j \geq 0}$ and $\{ \mathcal{R}(j) \}_{j \geq 0}$.}
	\label{tab:rnmpct_F_R_values}
	\arrayrulecolor{black}
	\resizebox{\columnwidth}{!}{%
		\begin{tabular}{|c|c|c|c|c|c|c|} 
			\hline
			\rowcolor{myblue!10} \multicolumn{2}{|c|}{$j$}                                                                                                  & \textit{0} & \textit{1}                  & \textit{2}                  & \textit{3}                  & \textit{4}                   \\ 
			\hline
			{\cellcolor{myblue!10}}                                   & {\cellcolor{myblue!10}}$\bar{x}_1,\mathcal{F}(j)$                      & 0.0019     & 0.0022                      & 0.0025                      & 0.0028                      & 0.0032                       \\ 
			\hhline{|>{\arrayrulecolor{myblue!10}}->{\arrayrulecolor{black}}------|}
			{\cellcolor{myblue!10}}                                   & {\cellcolor{myblue!10}}$\bar{x}_2,\mathcal{F}(j)$                      & 0.0020     & 0.0025                      & 0.0031                      & 0.0036                      & 0.0041                       \\ 
			\hhline{|>{\arrayrulecolor{myblue!10}}->{\arrayrulecolor{black}}------|}
			{\cellcolor{myblue!10}}                                   & {\cellcolor{myblue!10}}$\bar{x}_3,\mathcal{F}(j)$                      & 0.0020     & 0.0021                      & 0.0022                      & 0.0023                      & 0.0025                       \\ 
			\hhline{|>{\arrayrulecolor{myblue!10}}->{\arrayrulecolor{black}}------|}
			\multirow{-4}{*}{{\cellcolor{myblue!10}}$\mathcal{F}(j)$} & \multicolumn{1}{l|}{{\cellcolor{myblue!10}}$\bar{x}_4,\mathcal{F}(j)$} & 0.0019     & \multicolumn{1}{l|}{0.0019} & \multicolumn{1}{l|}{0.0020} & \multicolumn{1}{l|}{0.0020} & \multicolumn{1}{l|}{0.0021}  \\ 
			\hline
			{\cellcolor{myblue!10}}                                   & {\cellcolor{myblue!10}}$\bar{x}_1,\mathcal{R}(j)$                      & 0          & 0.0019                      & 0.0041                      & 0.0066                      & 0.0094                       \\ 
			\hhline{|>{\arrayrulecolor{myblue!10}}->{\arrayrulecolor{black}}------|}
			{\cellcolor{myblue!10}}                                   & {\cellcolor{myblue!10}}$\bar{x}_2,\mathcal{R}(j)$                      & 0          & 0.0020                      & 0.0046                      & 0.0076                      & 0.0112                       \\ 
			\hhline{|>{\arrayrulecolor{myblue!10}}->{\arrayrulecolor{black}}------|}
			{\cellcolor{myblue!10}}                                   & {\cellcolor{myblue!10}}$\bar{x}_3,\mathcal{R}(j)$                      & 0          & 0.0020                      & 0.0041                      & 0.0063                      & 0.0086                       \\ 
			\hhline{|>{\arrayrulecolor{myblue!10}}->{\arrayrulecolor{black}}------|}
			\multirow{-4}{*}{{\cellcolor{myblue!10}}$\mathcal{R}(j)$} & \multicolumn{1}{l|}{{\cellcolor{myblue!10}}$\bar{x}_4,\mathcal{R}(j)$} & 0          & \multicolumn{1}{l|}{0.0019} & \multicolumn{1}{l|}{0.0038} & \multicolumn{1}{l|}{0.0058} & \multicolumn{1}{l|}{0.0078}  \\
			\hline
		\end{tabular}%
	}
\end{table}

The chosen bounded subset $\mathcal{Y}_t$ of feasible setpoints with inactive constraints is provided in Figure \ref{fig:rnmpct_Ys}, while a robust invariant set for tracking $\mathcal{T}$ verifying all the conditions of Assumption \ref{ass:rnmpct_robust_terminal_set} is found as
\begin{equation}
	\mathcal{T} = \left\{ x: V_f(x-x_s,y_s) \leq \rho \right\} \times \mathcal{Y}_t,
\end{equation}
where $\rho = 0.1273$.

\begin{figure}[h]
	\begin{center}
		\caption{$\mathcal{Y}_t$ and projection of $\mathcal{Z}_\pi(4)$ on the first two state dimensions.} 
		\label{fig:rnmpct_Ys}
		\includegraphics[width=0.95\columnwidth,trim = 3.4cm 9cm 4.1cm 9.25cm,clip]{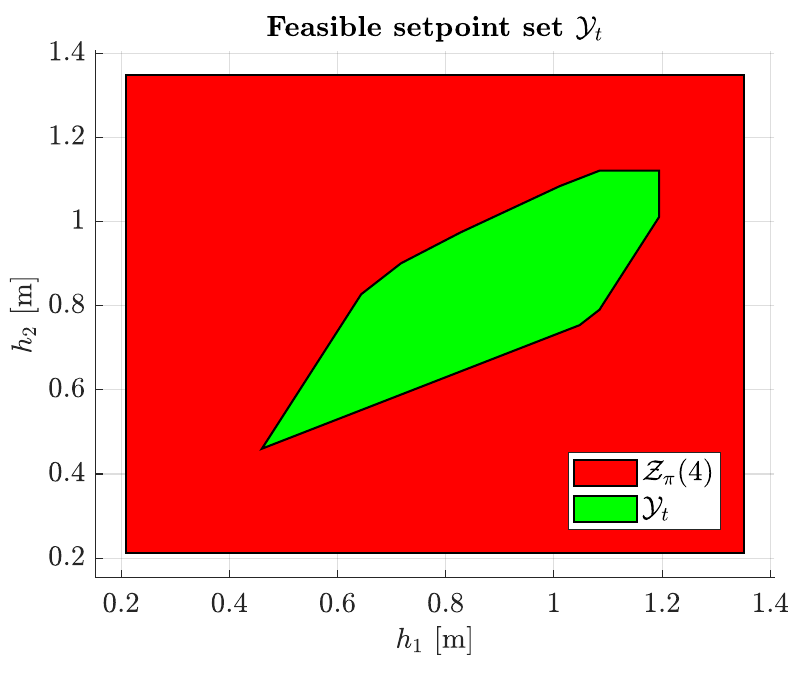}    % The printed column width is 8.4 cm.
	\end{center}
\end{figure}

We run 100 simulations with different values for $w_1$ and $w_2$. In detail, we test all the combinations that are obtained by using values for $w_1$ and $w_2$ in the set $\{ -0.00500, -0.00390, -0.00280, -0.00170, -0.00055, \allowbreak 0.00055, 0.0017, 0.00280, 0.00390, 0.00500 \}$, so as to cover the entire set $\mathcal{W}$. The simulation time is always equal to 100min and the reference $y_t$ is considered to take values $(0.65,0.65)$, $(0.35,0.35)$, $(0.60,0.75)$ and $(0.90,0.75)$.

The obtained state and input trajectories are shown in Figures \ref{fig:rnmpct_state_trajectories} and \ref{fig:rnmpct_input_trajectories} respectively, and prove how the here-proposed robust MPC for tracking is able to ensure input-to-state stability and constraint satisfaction of the closed-loop system, and to steer the system output to a neighborhood of the best admissible steady output, even in case of abruptly changing and infeasible setpoints.

\begin{figure}[h]
	\begin{center}
		\caption{State trajectories obtained in 100 different scenarios. The constraints are always verified, and the output of the system (in this case the first two states) converges to a neighborhood of the best admissible steady output.} 
		\label{fig:rnmpct_state_trajectories}
		\includegraphics[width=\columnwidth,trim = 3.5cm 6cm 3.5cm 6cm,clip]{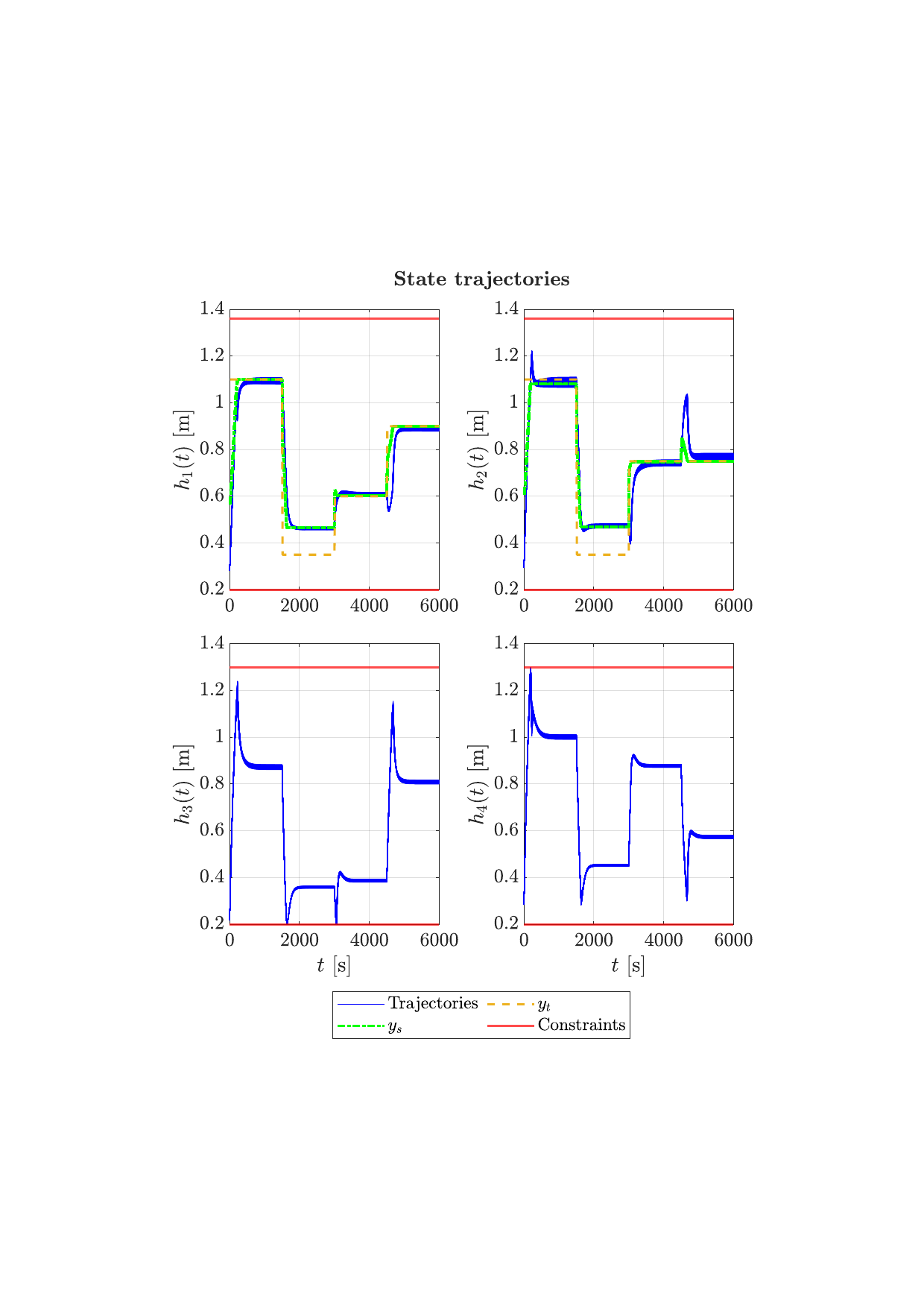}    % The printed column width is 8.4 cm.
	\end{center}
\end{figure}

\begin{figure}[h]
	\begin{center}
		\caption{Input trajectories obtained in 100 different scenarios. As expected, also the input values always verify the constraints.} 
		\label{fig:rnmpct_input_trajectories}
		\includegraphics[width=\columnwidth,trim = 3.75cm 10.5cm 3.5cm 10.25cm,clip]{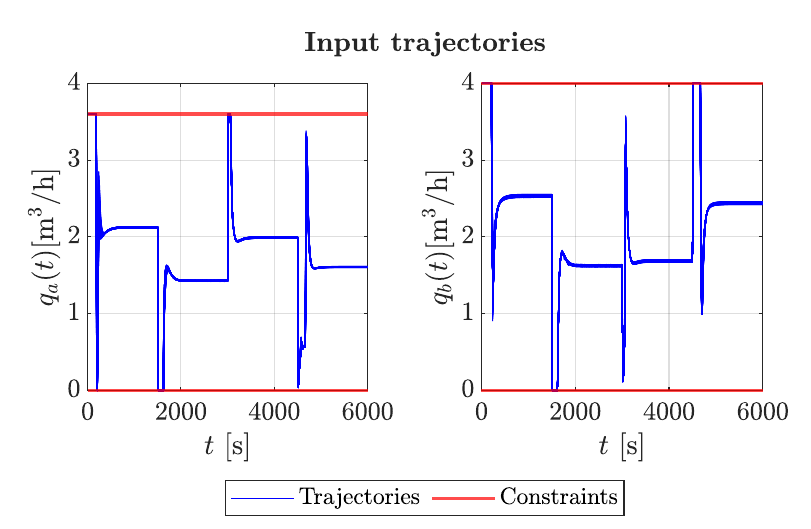}    % The printed column width is 8.4 cm.
	\end{center}
\end{figure}
\section{Conclusion}
\label{sec:conclusion}
A novel robust nonlinear MPC for tracking piece-wise constant reference signals for constrained nonlinear systems subject to bounded perturbations has been proposed. Under mild assumptions on the system model and the components of the chosen cost function, the here-proposed controller is proved to be recursively feasible, to make the closed-loop system input-to-state stable, and to ensure the convergence of the system output to a bounded set containing the best admissible steady output. A case study about the control of a four-tank system with uncertain parameters proved the deployability of our controller and gave concrete evidence of the feasibility and stability properties ensured by the theory.

\appendix
\section{Proofs}   % Each appendix must have a short title.
\label{sec:app_proofs}
For the following proofs, consider a feasible state $x(k) \in \mathcal{X}_{N_p}$ and the optimal solution $\big(\mathbf{v}^0(x(k),y_t),y_s^0(x(k),\allowbreak y_t)\big)$ to the problem $P_{N_p}(x(k),y_t)$, let $x(k+1)$ be the uncertain successor state and define a candidate solution $\big(\mathbf{v}^+(k+1),y_s^+(k+1)\big)$ to the problem $P_{N_p}(x(k+1),y_t)$ such that $y_s^+(k+1) = y_s^0(k)$ and $\mathbf{v}^+(k+1) = \{ v^0(1 \vert k),\hdots,v^0(N_p - 1 \vert k),v_f(y_s^+(k+1)) \}$.
\subsection{Proof of Lemma \ref{lem:rnmpct_robust_feasibility}}
\textbf{First statement:} Since $\phi_\pi(j;y_s^+(k+1),x(k+1),\allowbreak \mathbf{v}^+(k+1),\mathbf{0}) - \phi_\pi(j+1;y_s^0(k), \allowbreak x(k),\mathbf{v}^0(k),\mathbf{0}) \in \mathcal{F}(j)$, assumption \ref{ass:rnmpct_F_sequence} allows us to state that

\begin{equation}
	\begin{split}
		&\phi_\pi(j;y_s^+(k+1),x(k   + 1),\mathbf{v}^+(k+1),\mathbf{0})\\ &\in \phi_\pi(j   +   1;y_s^0(k),x(k),\mathbf{v}^0(k),\mathbf{0}) \oplus \mathcal{F}(j) \\
		&\subseteq \phi_\pi(j+1;y_s^0(k),x(k),\mathbf{v}^0(k),\mathbf{0}) \oplus \mathcal{R}(j+1) \ominus \mathcal{R}(j).
	\end{split}
\end{equation}

Since $\big(\phi_\pi(j + 1;y_s^0(k),x(k),\mathbf{v}^0(k),\mathbf{0}),v^0(j+1 \vert k)\big)   \in   \mathcal{Z}_\pi(j + 1)$, we have that

\begin{equation}
	\begin{split}
		&\big(\phi_\pi(j;y_s^+(k+1),x(k+1),\mathbf{v}^+(k+1),\mathbf{0}),v^0(j+1 \vert k)\big)\\ &\in \Big( \mathcal{Z}_\pi(j+1) \oplus \big( \mathcal{R}(j+1) \times \{ 0 \} \big) \ominus \big( \mathcal{R}(j) \times \{ 0 \} \big) \Big).
	\end{split}
\end{equation}

By definition,

\begin{equation}
	\begin{split}
		\mathcal{Z}_\pi(j+1) \oplus \big( \mathcal{R}(j+1) \times \{ 0 \} \big) \subseteq \mathcal{Z}_\pi
	\end{split}
\end{equation}

and $v^0(j+1 \vert k)=v^+(j \vert k)$, which implies that

\begin{equation}
	\begin{split}
		&\big(\phi_\pi(j;y_s^+(k+1),x(k+1),\mathbf{v}^+(k+1),\mathbf{0}),v^+(j \vert k+1)\big)\\ &\in \mathcal{Z}_\pi \ominus \big( \mathcal{R}(j) \times \{ 0 \} \big)   \subseteq   \mathcal{Z}_\pi(j).
	\end{split}
\end{equation}

\textbf{Second statement:} The feasibility of the solution $\big(\mathbf{v}^0(k),y_s^0(k)\big)$ implies that $\big( 
\phi_\pi(N_p; \allowbreak y_s^0(k),x(k),\mathbf{v}^0(k), \mathbf{0}),y_s^0(k) \big) \in \mathcal{T}$; in particular, $\phi_\pi(N_p; \allowbreak y_s^0(k),x(k),\mathbf{v}^0(k),  \mathbf{0}) \in \mathcal{X}_f(y_s^0(k))$ and $y_s^0(k) \in \mathcal{Y}_{\pi,s}$. On the other hand, from the definition of the set $\mathcal{F}(N_p-1)$, we have that

\begin{equation}
	\begin{split}
		&\phi_\pi(N_p-1;y_s^+(k+1),x(k+1),\mathbf{v}^+(k+1),\mathbf{0})\\ &\in \phi_\pi(N_p;y_s^0(k),x(k),\mathbf{v}^0(k),\mathbf{0}) \oplus \mathcal{F}(N_p-1).
	\end{split}
\end{equation}

Then

\begin{equation}
	\begin{split}
		&\phi_\pi(N_p-1;y_s^+(k+1),x(k+1),\mathbf{v}^+(k+1),\mathbf{0})\\ &\in \mathcal{X}_f(y_s^+(k+1)) \oplus \mathcal{F}(N_p-1).
	\end{split}
\end{equation}

Since $y_s^+(k+1) = y_s^0(k)$, $\mathcal{X}_f(y_s^+(k+1)) = \mathcal{X}_f(y_s^0(k))$ and $y_s^+(k+1) \in \mathcal{Y}_{\pi,s}$, that is $\big( \phi_\pi(N_p;y_s^+(k+1),x(k+1),\mathbf{v}^+(k+1),\mathbf{0}), \allowbreak y_s^+(k+1) \big) \in \mathcal{T}$.

\subsection{Proof of Theorem \ref{thm:rnmpct_iss_convergence}}
This proof is divided in two parts. First, the recursive feasibility of $P_{N_p}(\cdot,\cdot)$ is proved. In the second part, the input-to-state stability of $x_s^*$ and the convergence to a neighborhood of the best admissible steady output $y_s^*$ are proved. 

\textbf{Recursive feasibility:} the proof of recursive feasibility is given in Lemma \ref{lem:rnmpct_robust_feasibility}.

\textbf{Input-to-state stability and convergence:} Let $(x_s^*,u_s^*)$ be the equilibrium point given by $x_s^* = g_x(y_s^*)$ and $u_s^* = g_u(y_s^*)$. We want to prove that the closed-loop system is ISS and converges to a neighborhood of the best feasible steady output $y_s^*$. To this end, we show that the function 

\begin{equation} \label{eq:W_function}
	W(x(k),y_t) \coloneqq V_{N_p}^0(x(k),y_t) - V_O(y_s^* - y_t)
\end{equation}

is an ISS-Lyapunov function. Therefore, we need to prove that:
\begin{itemize}
	\item There exist two $\mathcal{K}_\infty$ functions $\bar{\alpha}(\cdot)$ and $\underline{\alpha}(\cdot)$ such that $\underline{\alpha}(\Vert x(k) - x_s^* \Vert) \leq W(x(k),y_t) \leq \bar{\alpha}(\Vert x(k) - x_s^* \Vert)$;
	\item There exist a $\mathcal{K}_\infty$ function $\alpha(\cdot)$ and a $\mathcal{K}$ function $\lambda(\cdot)$ such that
	 \begin{equation}
		\begin{split}
			&W(x(k+1),y_t) - W(x(k),y_t)\\ &\leq \lambda(\Vert w(k) \Vert) - \alpha (\Vert x(k) - x_s^* \Vert).
		\end{split}
	\end{equation} 
\end{itemize}
We start by looking for a $\mathcal{K}_\infty$ function that bounds $W(x(k),y_t)$ from below. From the definition of the function $W(\cdot,y_t)$, and due to Assumptions \ref{ass:rnmpct_tracking_bounds} and \ref{ass:rnmpct_offset_function}, we have that
 \begin{equation}
	\begin{split}
		W(x(k),y_t) &\geq \ell \big( x(k) - x_s^0(k), v^0(0 \vert k) - v_s^0(k) \big)\\ &+ V_O(y_s^0(k) - y_t) - V_O(y_s^* - y_t)\\
		&\geq \alpha_\ell ( \Vert x(k) - x_s^0(k) \Vert ) + \alpha_O ( \Vert y_s^0(k) - y_s^* \Vert ).
	\end{split}
\end{equation} 
Because of the Lipschitz continuity of $g_x(\cdot)$, for which $\Vert x_s^0(k) - x_s^* \Vert \leq L_g \Vert y_s^0(k) - y_s^* \Vert$, we have that
 \begin{equation}
	\begin{split}
		W(x(k),y_t) \! \geq \! \alpha_\ell ( \Vert x(k) \! - \! x_s^0(k) \Vert ) \! + \! \alpha_O \left( \frac{1}{L_g} \Vert x_s^0(k) \! - \! x_s^* \Vert \right).
	\end{split}
\end{equation} 
Provided that $\alpha_\ell(\cdot)$ and $\alpha_O(\cdot)$ are $\mathcal{K}_\infty$ functions, the function $\alpha_{\ell O}(\cdot) \coloneqq \min \{ \alpha_\ell(\cdot), \allowbreak \alpha_O(\cdot) \}$ is also a $\mathcal{K}_\infty$ function and, thanks to standard properties of $\mathcal{K}_\infty$ functions, allows us to prove that
 \begin{equation}
	\begin{split}
		&W(x(k),y_t) \! \geq \! \alpha_{\ell O} ( \Vert x(k) \! - \! x_s^0(k) \Vert ) \! + \! \alpha_{\ell O} \left( \frac{1}{L_g} \Vert x_s^0(k) \! - \! x_s^* \Vert \right)\\
		&\geq \alpha_{\ell O} \Big( \frac{1}{2} \Vert x(k) - x_s^0(k) \Vert + \frac{1}{2L_g} \Vert x_s^0(k) - x_s^* \Vert \Big) \\
		&\geq \alpha_{\ell O} \left( \min \left\{ \frac{1}{2},\frac{1}{2L_g} \right\} \Vert x(k) - x_s^* \Vert \right).
	\end{split}
\end{equation} 
Defining $\underline{\alpha}(\cdot)$ as a $\mathcal{K}_\infty$ function such that $\underline{\alpha}(\Vert x \Vert) \leq \alpha_{\ell O} \left( \min \left\{ \frac{1}{2},\frac{1}{2L_g} \right\} \Vert x \Vert \right)$ for all $x \in \mathbb{R}^n$, we finally get that
 \begin{equation}
	\label{eq:rnmpct_W_lower_bound_xs*}
	W(x(k),y_t) \geq \underline{\alpha}(\Vert x(k) - x_s^* \Vert).
\end{equation} 

The upper bound of $W(x(k),y_t)$ can be obtained thanks to Assumption \ref{ass:rnmpct_robust_terminal_set}, which implies $V_f(x(k)-x_s^*,y_s^*) \leq \alpha_f(\Vert x(k) - x_s^* \Vert)$ in $\Omega(y_s^*)$. This, together with $\mathcal{Z}$ being a closed set, implies, due to standard arguments in the MPC literature, that there exists a $\mathcal{K}_\infty$ function $\bar{\alpha}(\cdot)$ such that
 \begin{equation}
	\label{eq:rnmpct_W_upper_bound_xs*}
	\begin{split}
		W(x(k),y_t) &\leq \bar{\alpha}(\Vert x(k) - x_s^* \Vert)
	\end{split}
\end{equation} 
for all $x(k) \in \mathcal{X}_{N_p}$.

We now need to prove that there exist a $\mathcal{K}_\infty$ function $\alpha(\cdot)$ and a $\mathcal{K}$ function $\lambda(\cdot)$ that verify
 \begin{equation}
	W(x(k+1),y_t) - W(x(k),y_t) \leq \lambda(\Vert w(k) \Vert) - \alpha (\Vert x(k) - x_s^* \Vert).
\end{equation} 
To this end, we first prove the desired descent property with respect to the equilibrium $x_s^0$ (which does not have to be equal to $x_s^*$), and then prove the descent property with respect to $x_s^*$.

Let us define $\hat{x}^0(j \vert k) = \phi_\pi(j;y_s^0(k),x(k),\mathbf{v}^0(k),\mathbf{0})$ and $\hat{x}^+(j \vert k+1) = \phi_\pi(j;y_s^+(k+1),x(k+1),\mathbf{v}^+(k+1),\mathbf{0})$. Then
 \begin{equation}
	\begin{split}
		&V_{N_p}(x(k+1),y_t,\mathbf{v}^+(k+1),y_s^+(k+1))\\ &=  \! \sum_{j=0}^{N_p-1}  \ell \big( \hat{x}^+(j \vert k \! + \! 1) \! - \! x_s^+(k \! + \! 1),v^+(j \vert k \! + \! 1) \! - \! v_s^+(k \! + \! 1) \big) \\
		&+ V_f\big(\hat{x}^+(N_p \vert k+1) - x_s^+(k+1), y_s^+(k+1)\big)\\ &+ V_O(y_s^+(k+1) - y_t).
	\end{split}
\end{equation} 
From the component-wise uniform continuity of the model, we have that, for all $i \in [1,n]$,
 \begin{equation} 
	\begin{split}
		\vert \hat{x}_i^+(0 \vert k+1) - \hat{x}_i^0(1 \vert k) \vert &\leq \sum_{a = 1}^r \sigma_{w,ia} (w_a(k))\eqqcolon c_{i,0}(w(k)).
	\end{split}
\end{equation} 
In the same way,
 \begin{equation} 
	\begin{split}
		\vert \hat{x}_i^+(1 \vert k+1) - \hat{x}_i^0(2 \vert k) \vert &\leq \sum_{b = 1}^n \sigma_{x,ib} \left( \sum_{a = 1}^r \sigma_{w,ba} (w_a(k)) \right)\\ &= \! \sum_{b = 1}^n \sigma_{x,ib} (c_{b,0}(w(k))) \\ &\eqqcolon  c_{i,1}(w(k)).
	\end{split}
\end{equation} 
Generalizing for all $j > 0$,
 \begin{equation} 
	\begin{split}
		\vert \hat{x}_i^+(j \vert k \! + \! 1) \! - \! \hat{x}_i^0(j \! + \! 1 \vert k) \vert \! &\leq \! \sum_{b = 1}^n \sigma_{x,ib} \left( c_{b,j-1}(w(k)) \right)\\ &\eqqcolon \! c_{i,j}(w(k)).
	\end{split}
\end{equation} 

Then, considering the uniform continuity of the stage cost function $\ell(\cdot,\cdot)$ and the uniform continuity of the terminal cost $V_f(\cdot,y_s)$ with respect to $x$, we obtain:
 \begin{equation}
	\begin{split}
		&W(x(k+1),y_t) - W(x(k),y_t)\\ &= V_{N_p}^0(x(k+1),y_t) - V_{N_p}^0(x(k),y_t) \\
		&\leq V_{N_p}(x(k+1),y_t, \mathbf{v}^+(k+1),y_s^+(k+1)) - V_{N_p}^0(x(k),y_t) \\
		&= \! \sum_{j=0}^{N_p-2} \Big[ \ell \big( \hat{x}^+(j \vert k \! + \! 1) \! - \! x_s^+(k \! + \!1), v^+(j \vert k \! + \! 1) \! - \! v_s^+(k+1) \big)\\ 
		&- \ell \big( \hat{x}^0(j+1 \vert k) - x_s^0(k), v^0(j+1 \vert k) - v_s^0(k) \big)  \Big] \\
		&- \ell \big( x(k) - x_s^0(k), v^0(0 \vert k) - v_s^0(k) \big)\\
		&+ V_f \big( \hat{x}^+(N_p \vert k+1) - x_s^+(k+1), y_s^+(k+1) \big)\\ &- V_f \big( \hat{x}^0(N_p \vert k) - x_s^0(k), y_s^0(k) \big)\\
		&+ \! \ell \big( \hat{x}^+(N_p \! - \! 1 \vert k \!+ \! 1) \! - \! x_s^+(k \! + \! 1), v_f(y_s^+(k \! + \!1)) \! - \! v_s^+(k \! + \! 1) \big).
	\end{split}
\end{equation} 
Analyzing the stage cost function, we find that
 \begin{equation}
	\begin{split}
		&\ell \big( \hat{x}^+(j \vert k+1) - x_s^+(k+1), v^+(j \vert k+1) - v_s^+(k+1) \big)\\ &- \ell \big( \hat{x}^0(j  +  1 \vert k)  -  x_s^0(k), v^0(j+1 \vert k)  -  v_s^0(k) \big)\\ 
		&\leq \sigma_{\ell,x} \big( \Vert [ c_{1,j}(w(k)) \; \hdots \; c_{n,j}(w(k)) ]^\top \Vert \big).
	\end{split}
\end{equation} 
Defining $r_{j}(w) \coloneqq \Vert [c_{1,j}(w) \; \hdots \; c_{n,j}(w)]^\top \Vert$, the last inequality becomes
 \begin{equation}
	\begin{split}
		&\ell \big( \hat{x}^+(j \vert k+1) - x_s^+(k+1), v^+(j \vert k+1) - v_s^+(k+1) \big)\\ &- \ell \big( \hat{x}^0(j + 1 \vert k) - x_s^0(k), v^0(j+1 \vert k) - v_s^0(k) \big)\\ 
		&\leq \sigma_{\ell,x} \circ r_{j}(w(k)).
	\end{split}
\end{equation} 
Moreover, analyzing the terminal cost we get that
 \begin{equation}
	\begin{split}
		&V_f \big( \hat{x}^+(N_p-1 \vert k+1)  -  x_s^+(k+1), y_s^+(k+1) \big)\\  &- V_f \big( \hat{x}^0(N_p \vert k)  -  x_s^0(k), y_s^0(k) \big) \leq \sigma_f \circ r_{N_p-1}(w(k)).
	\end{split}
\end{equation} 
Due to the choice $y_s^+(k+1) = y_s^0(k)$, which implies $x_s^+(k+1) = x_s^0(k)$, and given that 
 \begin{equation}
	\begin{split}
		&V_f \big( \hat{x}^+(N_p \vert k+1) - x_s^+(k+1), y_s^+(k+1) \big)\\ &- V_f \big( \hat{x}^0(N_p \vert k) - x_s^0(k), y_s^0(k) \big)\\ &= V_f \big( \hat{x}^+(N_p \vert k+1) - x_s^+(k+1), y_s^+(k+1) \big)\\ &- V_f \big( \hat{x}^+(N_p-1 \vert k+1) - x_s^+(k+1), y_s^+(k+1) \big)\\ &+ V_f \big( \hat{x}^+(N_p-1 \vert k+1) - x_s^+(k+1), y_s^+(k+1) \big)\\ &- V_f \big( \hat{x}^0(N_p \vert k) - x_s^0(k), y_s^0(k) \big),
	\end{split}
\end{equation} 
we get
 \begin{equation}
	\begin{split}
		\Delta W &\coloneqq W(x(k   +   1),y_t)   -   W(x(k),y_t)\\ &\leq \sum_{j=0}^{N_p-2} \sigma_{\ell,x} \circ r_{j}(w(k)) + \sigma_f \circ r_{N_p-1}(w(k))\\ &- \ell \big( x(k) - x_s^0(k), v^0(0 \vert k) - v_s^0(k) \big)\\
		&+ \ell \big( \hat{x}^+(N_p-1 \vert k+1) - x_s^+(k+1),\\ &v_f(y_s^+(k+1))
		- v_s^+(k+1) \big)\\ &+ V_f \big( \hat{x}^+(N_p \vert k+1) - x_s^+(k+1), y_s^+(k+1) \big)\\ &- V_f \big( \hat{x}^+(N_p-1 \vert k+1) - x_s^0+(k+1), y_s^+(k+1) \big).
	\end{split}
\end{equation} 
Thanks to Assumption \ref{ass:rnmpct_robust_terminal_set}, we have that
 
\begin{align}
	&V_f(\hat{x}^+(N_p \vert k  +  1)   -   x_s^+(k+1),y_s^+(k+1)  ) \\   &-   V_f(\hat{x}^+(N_p   -  1 \vert k  + 1)  -   x_s^+(k+1),y_s^+(k+1) ) \nonumber \\ &\leq - \ell\big( \hat{x}^+(N_p \! - \! 1 \vert k \! + \! 1) \! - \! x_s^+(k \! + \!1),v_f(y_s^+(k \! + \! 1)) \! - \! v_s^+(k \! + \! 1) \big) \nonumber
\end{align}

in $\Omega(y_s^+(k+1))$, and hence, given that the functions $r_j(w)$ are vector norms, there exists a $\mathcal{K}$ function $\lambda_1(\cdot)$ such that 
 \begin{equation}
	\Delta W \leq - \ell \big( x(k)-x_s^0(k),v^0(0 \vert k) - v_s^0(k) \big) + \lambda_1(\Vert w(k) \Vert).
\end{equation} 
Given that $\ell(x,v) \geq \alpha_\ell (\Vert x \Vert)$, the last finding results in
 \begin{equation}
	\label{eq:rnmpct_W_decrease_xs0}
	\Delta W \leq - \alpha_\ell (\Vert x(k) - x_s^0(k) \Vert) + \lambda_1(\Vert w(k) \Vert),
\end{equation} 
which proves the descent property of $W(\cdot,y_t)$ with respect to the equilibrium $x_s^0$. This result will be exploited to prove the nominal descent property of $W(\cdot,y_t)$ with respect to $x_s^*$, which is needed to prove that $W(\cdot,y_t)$ is an ISS-Lyapunov function. In particular, we analyze two different cases. The logic behind the two cases is the following, and is inspired by \cite{Kohler2020_periodic}: 
\begin{itemize}
	\item if the tracking error is big, namely $x(k)$ is far from $x_s^0(k)$, the nominal descent property of $W(\cdot,y_t)$ with respect to $x_s^0(k)$ implies that $W(\cdot,y_t)$ nominally decreases also with respect to $x_s^*$;
	\item if at time instant $k$ the tracking error is small and $y_s^0(k) \neq y_s^*$, namely $x(k)$ is close to $x_s^0(k)$, but the equilibrium $x_s^0(k)$ is different from the best admissible steady state $x_s^*$, at time instant $k+1$ it is always possible to find a feasible candidate solution $(\tilde{\mathbf{v}}(k+1),\tilde{y}_s(k+1))$ such that $V_O(\tilde{y}_s(k+1) - y_t) < V_O(y_s^0(k) - y_t)$, and the decrease in $V_O(\cdot)$ also ensures the nominal descent of $W(\cdot,y_t)$ with respect to $x_s^*$.
\end{itemize}
To formally distinguish between the two cases, we analyze the ratio between the contributions of the tracking error and the offset cost function to the total cost. In detail, given a real constant $\gamma > 0$, which will be specified later, having a large tracking error translates into $\alpha_\ell(\Vert x(k) - x_s^0(k) \Vert) \geq \gamma \alpha_O(\Vert y_s^0(k) - y_s^* \Vert)$ (case 1), while having a tracking error contribution that is small compared to the offset cost function contribution results in $\alpha_\ell(\Vert x(k) - x_s^0(k) \Vert) \leq \gamma \alpha_O(\Vert y_s^0(k) - y_s^* \Vert)$ (case 2). We now prove the nominal descent property of $W(\cdot,y_t)$ in the two cases.

\textbf{Case 1:} Provided that 
 \begin{equation}
	\label{eq:rnmpct_case_1}
	\alpha_\ell(\Vert x(k) - x_s^0(k) \Vert) \geq \gamma \alpha_O(\Vert y_s^0(k) - y_s^* \Vert),
\end{equation} 
\eqref{eq:rnmpct_W_decrease_xs0} and the Lipschitz continuity of $g_x(\cdot)$ imply
 \begin{equation}
	\label{eq:rnmpct_W_decrease_case_1_a}
	\begin{split}
		&\Delta W \leq \lambda_1(\Vert w(k) \Vert) - \alpha_\ell(\Vert x(k) - x_s^0(k) \Vert)\\
		&= \lambda_1(\Vert w(k) \Vert) \! - \! \frac{1}{2}\alpha_\ell(\Vert x(k) \! - \! x_s^0(k) \Vert) \! - \! \frac{1}{2}\alpha_\ell(\Vert x(k) \! - \! x_s^0(k) \Vert) \\
		&\leq \lambda_1(\Vert w(k) \Vert) \! - \! \frac{1}{2} \Big[ \alpha_\ell(\Vert x(k) \! - \! x_s^0(k) \Vert) \! + \! \gamma \alpha_O(\Vert y_s^0(k) \! - \! y_s^* \Vert) \Big]\\
		&\leq \lambda_1(\Vert w(k) \Vert) - \frac{1}{2} \Bigg[ \alpha_\ell(\Vert x(k) - x_s^0(k) \Vert)\\
		&+ \gamma \alpha_O\left(\frac{1}{L_g}\Vert x_s^0(k) - x_s^* \Vert\right) \Bigg]\\
		&\leq \lambda_1(\Vert w(k) \Vert) - \frac{1}{2} \min \{1, \gamma\} \Bigg[ \alpha_\ell(\Vert x(k) - x_s^0(k) \Vert)\\
		&+ \alpha_O\left(\frac{1}{L_g}\Vert x_s^0(k) - x_s^* \Vert\right) \Bigg].
	\end{split}
\end{equation} 
Using the functions $\alpha_{\ell O}(\cdot)$ and $\underline{\alpha}(\cdot)$ defined earlier, we get
\begin{equation}
	\label{eq:rnmpct_W_decrease_case_1_b}
	\begin{split}
		&\Delta W \leq \lambda_1(\Vert w(k) \Vert) - \frac{1}{2} \min \{1, \gamma\} \Bigg[ \alpha_{\ell O}(\Vert x(k) - x_s^0(k) \Vert)\\
		&+ \alpha_{\ell O}\left(\frac{1}{L_g}\Vert x_s^0(k) - x_s^* \Vert\right) \Bigg] \\
		&\leq \lambda_1(\Vert w(k) \Vert) - \frac{1}{2} \min \{1, \gamma\} \alpha_{\ell O} \Bigg( \frac{1}{2} \Vert x(k) - x_s^0(k) \Vert\\
		&+ \frac{1}{2L_g} \Vert x_s^0(k) - x_s^* \Vert \Bigg) \\
		&\leq \lambda_1(\Vert w(k) \Vert) \! - \! \frac{1}{2} \min \{1, \gamma\}
		\alpha_{\ell O} \!\! \left(\! \min \left\{ \! \frac{1}{2},\frac{1}{2L_g} \! \right\} \! \Vert x(k) \! - \! x_s^* \Vert \! \right)\\
		&\leq \lambda_1(\Vert w(k) \Vert) - \frac{1}{2} \min \{1, \gamma\} \underline{\alpha}(\Vert x(k) - x_s^* \Vert).
	\end{split}
\end{equation}
\textbf{Case 2:} Assume
\begin{equation}
	\label{eq:rnmpct_case_2}
	\alpha_\ell(\Vert x(k) - x_s^0(k) \Vert) \leq \gamma \alpha_O(\Vert y_s^0(k) - y_s^* \Vert).
\end{equation} 
The proof of the nominal descent of $W(\cdot,y_t)$ in this case is divided into five steps:
\begin{enumerate}
	\item We first prove that, by choosing a small enough $\gamma$, \eqref{eq:rnmpct_case_2} implies that $x(k) \in \mathcal{X}_f(y_s^0(k))$; in particular, $\alpha_\ell(\Vert x(k)-x_s^0(k) \Vert) \leq \epsilon_x$, where $\epsilon_x > 0$ is a real constant, whose meaning will be clarified later.
	\item We prove that, given a state $x(k)$ such that $\alpha_\ell(\Vert x(k)-x_s^0(k) \Vert) \leq \epsilon_x$, the following nominal state $\hat{x}^0(1 \vert k)$ is such that $\alpha_\ell(\Vert \hat{x}^0(1 \vert k) - x_s^0(k) \Vert) \leq \epsilon_x / 2$.
	\item Leveraging the fact that $\hat{x}^0(1 \vert k)$ is in the interior of $\mathcal{X}_f(y_s^0(k))$, we prove that a candidate artificial reference $\tilde{y}_s(k+1)$ such that $ V_O(\tilde{y}_s(k+1) - y_t) < V_O(y_s^0(k) - y_t)$ would be feasible for $\hat{x}^0(1 \vert k)$.
	\item We prove that the candidate artificial reference $\tilde{y}_s(k+1)$ makes the function $W(\cdot,y_t)$ decrease in the perturbation-free case.
	\item We prove that the gap $W(x(k+1)) - W(\hat{x}^0(1 \vert k))$, where $x(k+1)$ is the uncertain following state, is bounded from above by a $\mathcal{K}$ function $\lambda_2: \; \mathcal{W} \to \mathbb{R}_{\geq 0}$, which allows us to state that $W(\cdot,y_t)$ is an ISS-Lyapunov function.
\end{enumerate}
Let us first discuss the role of the constant $\epsilon_x$. Assuming to have terminal regions $\mathcal{X}_f(y_s)$ such that $\{x_s\} \subset \mathcal{X}_f(y_s)$, namely the sets $\mathcal{X}_f(y_s)$ contain more than one point, there surely exists a constant $\epsilon_x > 0$ such that the condition $\alpha_\ell(\Vert x - x_s \Vert) \leq \epsilon_x$ implies $x \in \mathcal{X}_f(y_s)$, for all $y_s \in \mathcal{Y}_t$.
Moreover, provided that $\alpha_\ell(\Vert x - x_s \Vert) \leq \epsilon_x$ implies $x \in \mathcal{X}_f(y_s)$, we also have that, for all $x$ that verify this condition, the couple $\big( \mathbf{v},y_s \big)$, with $\mathbf{v} \coloneqq \{ v_f(y_s), \hdots, v_f(y_s) \}$, is a feasible solution and the next inequality holds true:
\begin{equation}
	\label{eq:rnmpct_Vf_bound}
	\begin{split}
		V^0_{N_p}(x,y_t) &\leq V_f(x - x_s, y_s) + V_O(y_s - y_t) \\
		&\leq \alpha_f(\Vert x - x_s \Vert) + V_O(y_s - y_t).
	\end{split}
\end{equation} 
Provided that the set $\mathcal{Y}_t$ is bounded, there exists a real constant $\alpha_{O,\mathrm{max}} > 0$ such that $\alpha_O(\Vert y_1 - y_2 \Vert) \leq \alpha_{O,\mathrm{max}}$ for any $y_1, y_2 \in \mathcal{Y}_t$. Therefore, by choosing $\gamma \leq \gamma_1 \coloneqq \frac{\epsilon_x}{\alpha_{O,\mathrm{max}}}$, we get that
\begin{equation}
	\alpha_\ell(\Vert x(k) - x_s^0(k) \Vert) \leq \gamma \alpha_{O,\mathrm{max}} \leq \epsilon_x,
\end{equation} 
i.e. $x(k) \in \mathcal{X}_f(y_s^0(k))$.

Now, given the optimal solution $(\mathbf{v}^0(k),y_s^0(k))$, we use the previously defined notation $\hat{x}^0(j \vert k) = \phi_\pi(j;y_s^0(k),x(k),\mathbf{v}^0(k),\mathbf{0})$ and focus on the following nominal state $\hat{x}^0(1 \vert k)$. Specifically, we note that \eqref{eq:rnmpct_case_2} and \eqref{eq:rnmpct_Vf_bound} imply 
\begin{equation}
	\label{eq:rnmpct_1_step_inside_a}
	\begin{split}
		\alpha_\ell(\Vert \hat{x}^0(1 \vert k) \! - \! x_s^0(k) \Vert) &\leq \ell \big( \hat{x}^0(1 \vert k) \! - \! x_s^0(k), v^0(1 \vert k) \! - \! v_s^0(k) \big)\\
		&\leq \alpha_f (\Vert x(k) \! - \! x_s^0(k) \Vert).
	\end{split}
\end{equation} 
Since $\alpha_\ell (\Vert x(k) - x_s^0(k) \Vert) \leq \gamma \alpha_{O,\mathrm{max}}$ and $\alpha_\ell(\cdot)$ is a $\mathcal{K}_\infty$ function, it also holds true that $\Vert x(k) - x_s^0(k) \Vert \leq \alpha_\ell^{-1}(\gamma \alpha_{O,\mathrm{max}})$, which implies that
\begin{equation}
	\label{eq:rnmpct_1_step_inside_b}
	\begin{split}
		\alpha_\ell(\Vert \hat{x}^0(1 \vert k) - x_s^0(k) \Vert) &\leq \alpha_f (\Vert x(k) - x_s^0(k) \Vert) \\
		&\leq \alpha_f \circ \alpha_\ell^{-1} (\gamma \alpha_{O,\mathrm{max}}).
	\end{split}
\end{equation} 
For $\gamma \leq \gamma_2 \coloneqq \frac{\alpha_\ell \circ \alpha_f^{-1} ( \epsilon_x / 2 )}{\alpha_{O,\mathrm{max}}}$, we have that $\alpha_\ell(\Vert \hat{x}^0(1 \vert k) - x_s^0(k) \Vert) \leq \epsilon_x / 2$. This means that the nominal state $\hat{x}^0(1 \vert k)$ is inside the interior of $\left\{ x: \; \alpha_\ell(\Vert x - x_s^0 \Vert) \leq \epsilon_x \right\}$, which would make the problem $P_{N_p}(\hat{x}^0(1 \vert k),y_t)$ feasible, as $\hat{x}^0(1 \vert k) \in \mathcal{X}_f(y_s^0(k))$ and the terminal control input $v_f(y_s^0(k))$ is feasible. 

Given that $\hat{x}^0(1 \vert k)$ is in the interior of $\mathcal{X}_f(y_s^0(k))$, we can consider the possibility of selecting a new artificial reference $\tilde{y}_s(k+1)$, closer to $y_s^*$, while still ensuring that the state remains within the set $\mathcal{X}_f(\tilde{y}_s(k+1))$. In detail, we seek a candidate $\tilde{y}_s(k+1)$ such that $V_O(\tilde{y}_s(k+1) - y_t) < V_O(y_s^+(k+1) - y_t)$, recalling that $y_s^+(k+1) \equiv y_s^0(k)$. 
Exploiting the convexity of $V_O(\cdot)$, there exists a new candidate $\tilde{y}_s(k+1) \in \mathcal{Y}_t$ such that
\begin{equation}
	\label{eq:rnmpct_candidate_hat_ys}
	\tilde{y}_s(k+1) \coloneqq \beta y_s^0(k) + (1-\beta)y_s^*, \quad \beta \in [0,1].
\end{equation} 
Equation \eqref{eq:rnmpct_candidate_hat_ys} can be rewritten as 
\begin{equation}
	\label{eq:rnmpct_candidate_hat_ys_2}
	\tilde{y}_s(k+1) - y_s^0(k) = (1 - \beta) (y_s^* - y_s^0(k)).
\end{equation} 
Due to this new candidate solution, the offset cost satisfies
\begin{equation}
	\label{eq:rnmpct_V_O_decrease_1}
	\begin{split}
		&V_O(\tilde{y}_s(k+1) - y_t) - V_O(y_s^0(k) - y_t)\\
		&= V_O\big(\beta (y_s^0(k) - y_t) + (1 - \beta)(y_s^* - y_t) \big) - V_O(y_s^0(k) - y_t).
	\end{split}
\end{equation} 
Given the convexity of $V_O(\cdot)$, \eqref{eq:rnmpct_V_O_decrease_1} becomes
 \begin{equation}
	\label{eq:rnmpct_V_O_decrease_2}
	\begin{split}
		&V_O(\tilde{y}_s(k+1) - y_t) - V_O(y_s^0(k) - y_t)\\
		&\leq \beta V_O\big(y_s^0(k) \! - \! y_t\big) \! + \! (1 \! - \! \beta) V_O\big(y_s^* \! - \! y_t\big) - V_O(y_s^0(k) \! - \! y_t)\\
		&= -(1 - \beta) \Big[ V_O\big(y_s^0(k) - y_t\big) - V_O\big(y_s^* - y_t\big) \Big] \\
		&\leq -(1 - \beta) \alpha_O (\Vert y_s^0(k) - y_s^* \Vert),
	\end{split}
\end{equation} 
which, for $\beta \in [0,1)$, proves that $\tilde{y}_s(k+1)$ is such that $V_O(\tilde{y}_s(k+1) - y_t) < V_O(y_s^0(k) \allowbreak - y_t)$. We now need to find a lower bound $\beta_1$ such that, for all $\beta \geq \beta_1$, the new candidate $\tilde{y}_s(k+1)$ ensures also that $\alpha_\ell(\Vert \hat{x}^0(1 \vert k) - \tilde{x}_s(k+1) \Vert) \leq \epsilon_x$. In order to find $\beta_1$, we start by stating that, because of the Lipschitz continuity of $g_x(\cdot)$, 
 \begin{equation}
	\label{eq:rnmpct_new_candidate_in_epsilon_a}
	\begin{split}
		&\alpha_\ell \big(\Vert \hat{x}^0(1 \vert k) - \tilde{x}_s(k+1) \Vert\big) \leq \alpha_\ell\big(\Vert \hat{x}^0(1 \vert k) - x_s^0(k) \Vert\\ &+ \Vert x_s^0(k) - \tilde{x}_s(k+1) \Vert\big) \\
		&\leq \alpha_\ell\big( \Vert \hat{x}^0(1 \vert k) - x_s^0(k) \Vert + L_g \Vert y_s^0(k) - \tilde{y}_s(k+1) \Vert \big) \\
		&= \alpha_\ell\big( \Vert \hat{x}^0(1 \vert k) - x_s^0(k) \Vert + L_g(1   -   \beta) \Vert y_s^0(k)   -   y_s^* \Vert \big) \\
		&\leq \alpha_\ell \big( \alpha_\ell^{-1} (\epsilon_x / 2) + L_g(1 - \beta) \Vert y_s^0(k) - y_s^* \Vert \big).
	\end{split}
\end{equation} 
Recall that $\alpha_O(\Vert y_s^0(k) - y_s^* \Vert) \leq \alpha_{O,\mathrm{max}}$, which implies $\Vert y_s^0(k) - y_s^* \Vert \leq \alpha_O^{-1} (\alpha_{O,\mathrm{max}})$. For this reason, \eqref{eq:rnmpct_new_candidate_in_epsilon_a} can be rewritten as
 \begin{equation}
	\label{eq:rnmpct_new_candidate_in_epsilon_b}
	\begin{split}
		&\alpha_\ell \big(\Vert \hat{x}^0(1 \vert k) - \tilde{x}_s(k+1) \Vert\big)\\
		&\leq \alpha_\ell \big( \alpha_\ell^{-1} (\epsilon_x / 2) + L_g(1 - \beta) \alpha_O^{-1}(\alpha_{O,\mathrm{max}}) \big).
	\end{split}
\end{equation} 
Therefore, for $\beta \geq \beta_1 \coloneqq 1 - \frac{\alpha_\ell^{-1}(\epsilon_x) - \alpha_\ell^{-1} (\epsilon_x / 2)}{L_g \cdot \alpha_O^{-1}(\alpha_{O,\mathrm{max}})}$, $\alpha_\ell(\Vert \hat{x}^0(1 \vert k) - \tilde{x}_s(k+1) \Vert) \leq \epsilon_x$, which ensures that:
\begin{enumerate}
	\item $\hat{x}^0(1 \vert k) \in \mathcal{X}_f(\tilde{y}_s(k+1))$, guaranteeing that there exists a candidate control sequence $\tilde{\mathbf{v}}(k+1)$ such that the couple $(\tilde{\mathbf{v}}(k+1),\tilde{y}_s(k+1))$ is a feasible solution to $P_{N_p}(\hat{x}^0(1 \vert k),y_t)$; indeed, $\tilde{\mathbf{v}}(k+1) \coloneqq \{ v_f(\tilde{y}_s(k+1)),\hdots, \allowbreak v_f(\tilde{y}_s(k+1)) \}$ is a feasible control sequence.
	\item The optimal control sequence $\tilde{\mathbf{v}}(k+1)$ is such that 
	 \begin{equation}
		\begin{split}
			&V_{N_p}^0(\hat{x}^0(1 \vert k),y_t,\tilde{\mathbf{v}}(k \! + \! 1),\tilde{y}_s(k \! + \!1)) \! - \! V_O(\tilde{y}_s(k \!+ \!1) \! - \! y_t)\\ &\leq \alpha_f (\Vert \hat{x}^0(1 \vert k) - \tilde{x}_s(k+1) \Vert).
		\end{split}
	\end{equation} 
\end{enumerate}
Due to $\alpha_\ell(\Vert \hat{x}^0(1 \vert k) - \tilde{x}_s(k+1) \Vert) \leq \epsilon_x$ and to the Lipschitz continuity of $g_x(\cdot)$, we have that
 \begin{equation}
	\label{eq:rnmpct_V_hat_bound}
	\begin{split}
		&V_{N_p}(\hat{x}^0(1 \vert k),y_t,\tilde{\mathbf{v}}(k+1),\tilde{y}_s(k+1))  -  V_O(\tilde{y}_s  -  y_t)\\ &\leq \alpha_f(\Vert \hat{x}^0(1 \vert k)  -  \tilde{x}_s(k+1) \Vert) \\
		&\leq \alpha_f \big( \Vert \hat{x}^0(1 \vert k) - x_s^0(k) \Vert + \Vert x_s^0(k) - \tilde{x}_s(k+1) \Vert \big) \\
		&\leq \alpha_f \big( \Vert \hat{x}^0(1 \vert k) - x_s^0(k) \Vert + L_g\Vert y_s^0(k) - \tilde{y}_s(k+1) \Vert \big) \\
		&= \alpha_f \big( \Vert \hat{x}^0(1 \vert k) - x_s^0(k) \Vert + L_g(1 - \beta)\Vert y_s^0(k) - y_s^* \Vert \big).
	\end{split}
\end{equation} 
To analyze $W(x(k+1),y_t) - W(x(k),y_t) = V^0_{N_p}(x(k+1),y_t) \allowbreak - V^0_{N_p}(x(k),y_t)$, where $x(k+1)$ is the following uncertain state, we note that
 \begin{equation}
	\label{eq:rnmpct_DeltaW_split}
	\begin{split}
		&V^0_{N_p}(x(k+1),y_t) - V^0_{N_p}(x(k),y_t)\\ 
		&\leq V_{N_p}(x(k+1),y_t,\tilde{\mathbf{v}}(k+1),\tilde{y}_s(k+1)) - V^0_{N_p}(x(k),y_t) \\ 
		&\leq V_{N_p}(x(k+1),y_t,\tilde{\mathbf{v}}(k+1),\tilde{y}_s(k+1))\\ &- V_{N_p}(\hat{x}^0(1 \vert k),y_t,\tilde{\mathbf{v}}(k+1),\tilde{y}_s(k+1)) \\ 
		&+ V_{N_p}(\hat{x}^0(1 \vert k),y_t,\tilde{\mathbf{v}}(k+1),\tilde{y}_s(k+1)) - V^0_{N_p}(x(k),y_t).
	\end{split}
\end{equation} 
We analyze this difference in two steps: (i) first, we analyze the nominal descent of $W(\cdot,y_t)$, which is provided by the term $V_{N_p}(\hat{x}^0(1 \vert k),y_t,\tilde{\mathbf{v}}(k+1),\tilde{y}_s(k+1)) - V^0_{N_p}(x(k),y_t)$, (ii) then, we analyze the contribution of the perturbation $w(k)$, which is provided by $V_{N_p}(x(k+1),y_t;\tilde{\mathbf{v}}(k+1),\tilde{y}_s(k+1)) - V_{N_p}(\hat{x}^0(1 \vert k),y_t;\tilde{\mathbf{v}}(k+1), \allowbreak \tilde{y}_s(k+1))$.

Let us analyze $V_{N_p}(\hat{x}^0(1 \vert k),y_t,\tilde{\mathbf{v}}(k+1),\tilde{y}_s(k+1)) - V^0_{N_p}(x(k),y_t)$:
 \begin{equation}
	\label{eq:rnmpct_W_decrease_xs*}
	\begin{split}
		&V_{N_p}(\hat{x}^0(1 \vert k),y_t,\tilde{\mathbf{v}}(k+1),\tilde{y}_s(k+1)) - V^0_{N_p}(x(k),y_t)\\ 
		&\leq \alpha_f(\Vert \hat{x}^0(1 \vert k) - \tilde{x}_s(k+1) \Vert) - \alpha_\ell(\Vert x(k) - x_s^0(k) \Vert) \\
		&+ V_O(\tilde{y}_s(k+1) - y_t) - V_O(y_s^0(k) - y_t).
	\end{split}
\end{equation} 
Equations \eqref{eq:rnmpct_V_O_decrease_2} and \eqref{eq:rnmpct_V_hat_bound} imply that \eqref{eq:rnmpct_W_decrease_xs*} becomes
 \begin{equation}
	\label{eq:rnmpct_W_decrease_xs*_2}
	\begin{split}
		&V_{N_p}(\hat{x}^0(1 \vert k),y_t,\tilde{\mathbf{v}}(k+1),\tilde{y}_s(k+1)) - V^0_{N_p}(x(k),y_t)\\
		&\leq \alpha_f \big( \Vert \hat{x}^0(1 \vert k)  -  x_s^0(k) \Vert  +  L_g(1  -  \beta) \Vert y_s^0(k)  -  y_s^* \Vert \big)\\ &-  \alpha_\ell(\Vert x(k)  -  x_s^0(k) \Vert)-(1 - \beta) \alpha_O(\Vert y_s^0(k) - y_s^* \Vert) \\
		&\leq \alpha_f\big(2 \Vert \hat{x}^0(1 \vert k)  -  x_s^0(k) \Vert\big)\\ &+ \alpha_f\big( 2L_g(1 - \beta) \Vert y_s^0(k) - y_s^* \Vert \big) \\
		&-(1 - \beta) \alpha_O(\Vert y_s^0(k) - y_s^* \Vert) - \alpha_\ell(\Vert x(k) - x_s^0(k) \Vert).
	\end{split}
\end{equation} 
Recalling that $\alpha_f(\Vert x \Vert) = \bar{\alpha}_f \cdot \Vert x \Vert^2$, \eqref{eq:rnmpct_W_decrease_xs*_2} rewrites as
 \begin{equation}
	\label{eq:rnmpct_W_decrease_xs*_2b}
	\begin{split}
		&V_{N_p}(\hat{x}^0(1 \vert k),y_t,\tilde{\mathbf{v}}(k+1),\tilde{y}_s(k+1)) - V^0_{N_p}(x(k),y_t)\\
		&\leq 4 \bar{\alpha}_f \Vert \hat{x}^0(1 \vert k)  -  x_s^0(k) \Vert^2 + 4\bar{\alpha}_f \cdot L_g^2 (1 - \beta)^2 \Vert y_s^0(k) - y_s^* \Vert^2 \\
		&-(1 - \beta) \alpha_O(\Vert y_s^0(k) - y_s^* \Vert) - \alpha_\ell(\Vert x(k) - x_s^0(k) \Vert).
	\end{split}
\end{equation} 
Let us define the function $g: \; \mathbb{R} \times \mathcal{S} \to \mathbb{R}$ as 
 \begin{equation}
	\begin{split}
		g(\beta,s) &\coloneqq (1-\beta) \alpha_O(s)- 4\bar{\alpha}_f L_g^2 (1-\beta)^2 s^2,
	\end{split}
\end{equation} 
such that \eqref{eq:rnmpct_W_decrease_xs*_2b} becomes
 \begin{equation}
	\label{eq:rnmpct_W_decrease_xs*_2c}
	\begin{split}
		&V_{N_p}(\hat{x}^0(1 \vert k),y_t,\tilde{\mathbf{v}}(k+1),\tilde{y}_s(k+1)) - V^0_{N_p}(x(k),y_t) \\
		&\leq   4 \bar{\alpha}_f \Vert \hat{x}^0(1 \vert k)  -  x_s^0(k) \Vert^2    -g(\beta, \Vert y_s^0(k)   -  y_s^* \Vert)\\ &- \alpha_\ell(\Vert x(k)   -   x_s^0(k) \Vert).
	\end{split}
\end{equation} 
In order to prove the nominal descent property of $W(\cdot,y_t)$, we must ensure that there exists a $\mathcal{K}_\infty$ function $\alpha_\beta(\cdot)$ such that $g(\beta,\Vert y_s^0(k) - y_s^* \Vert) \geq \alpha_\beta(\Vert y_s^0(k) - y_s^* \Vert)$. Thanks to Assumption \ref{ass:rnmpct_alpha_O_alpha_f_ratio}, in order for these conditions to hold true, it is sufficient to choose $\beta \in (\beta_2,1)$, where $\beta_2 \coloneqq 1 - \min \{\mathfrak{b}_1,\mathfrak{b}_2\}$. Since both $\mathfrak{b}_1$ and $\mathfrak{b}_2$ are assumed to be positive, the interval $(\beta_2,1)$ is guaranteed to be non-empty. This is shown at the end of the current proof.

Given the function $\alpha_\beta(\cdot)$, \eqref{eq:rnmpct_W_decrease_xs*_2c} can be rewritten as
 \begin{equation}
	\label{eq:rnmpct_W_decrease_xs*_3}
	\begin{split}
		&V_{N_p}(\hat{x}^0(1 \vert k),y_t,\tilde{\mathbf{v}}(k+1),\tilde{y}_s(k+1)) - V^0_{N_p}(x(k),y_t) \\
		&\leq   4 \bar{\alpha}_f \Vert \hat{x}^0(1 \vert k)  -  x_s^0(k) \Vert^2   -\alpha_\beta(\Vert y_s^0(k)   -   y_s^* \Vert)\\ &-   \alpha_\ell(\Vert x(k)   -   x_s^0(k) \Vert)\\
		&\leq 4 \bar{\alpha}_f \Vert \hat{x}^0(1 \vert k)  -  x_s^0(k) \Vert^2 -\frac{1}{2}\alpha_\beta(\Vert y_s^0(k) - y_s^* \Vert)\\
		&-\frac{1}{2}\alpha_\beta(\Vert y_s^0(k) - y_s^* \Vert) - \alpha_\ell(\Vert x(k) - x_s^0(k) \Vert).
	\end{split}
\end{equation} 
We now analyze \eqref{eq:rnmpct_W_decrease_xs*_3} in two steps:
\begin{enumerate}
	\item we first prove that it is always possible to choose a value for $\gamma$ that is sufficiently small to ensure that $4 \bar{\alpha}_f \Vert \hat{x}^0(1 \vert k)  -  x_s^0(k) \Vert^2 -\frac{1}{2}\alpha_\beta(\Vert y_s^0(k) - y_s^* \Vert) \leq 0$;
	\item we then prove that the term $\frac{1}{2}\alpha_\beta(\Vert y_s^0(k) - y_s^* \Vert) + \alpha_\ell(\Vert x(k) - x_s^0(k) \Vert)$ can be lower-bounded by a $\mathcal{K}_\infty$ function $\alpha_2(\Vert x(k) - x_s^* \Vert)$, which is then used to prove that $W(\cdot,y_t)$ is an ISS-Lyapunov function.
\end{enumerate}
Starting from the first point, recall that $\alpha_\ell(\Vert \hat{x}^0(1 \vert k) - x_s^0(k) \Vert) \leq \bar{\alpha}_f \Vert x(k) - x_s^0(k) \Vert^2$ and also that $\alpha_\ell (\Vert x(k) - x_s^0(k) \Vert) \leq \gamma \alpha_O (\Vert y_s^0(k) - y_s^* \Vert)$. Due to these inequalities, we get
 \begin{equation}
	\Vert \hat{x}^0(1 \vert k) - x_s^0(k) \Vert \leq \alpha_\ell^{-1} \Big( \bar{\alpha}_f \cdot \Big[ \alpha_\ell^{-1} \big( \gamma \alpha_O (\Vert y_s^0(k) - y_s^* \Vert) \big) \Big]^2 \Big),
\end{equation} 
and, as a consequence, $4 \bar{\alpha}_f \Vert \hat{x}^0(1 \vert k)  -  x_s^0(k) \Vert^2 -\frac{1}{2}\alpha_\beta(\Vert y_s^0(k) - y_s^* \Vert) \leq 0$ for $\gamma \leq \gamma_3$, where $\gamma_3$ is defined as
 \begin{equation}
	\gamma_3 \coloneqq \frac{\alpha_\ell \left( \sqrt{\frac{1}{\bar{\alpha}_f}\cdot \alpha_\ell \left( \sqrt{\frac{1}{8\bar{\alpha}_f} \alpha_\beta(\Vert y_s^0(k) - y_s^* \Vert)} \right)} \right)}{\alpha_O(\Vert y_s^0(k) - y_s^* \Vert)}.
\end{equation} 
Consequently, for $\gamma \leq \gamma_3$, and considering the Lipschitz continuity of $g_x(\cdot)$, \eqref{eq:rnmpct_W_decrease_xs*_3} can be rewritten as
 \begin{equation}
	\label{eq:rnmpct_W_decrease_xs*_4}
	\begin{split}
		&V_{N_p}(\hat{x}^0(1 \vert k),y_t,\tilde{\mathbf{v}}(k+1),\tilde{y}_s(k+1)) - V^0_{N_p}(x(k),y_t)\\ 
		&\leq -\frac{1}{2}\alpha_\beta(\Vert y_s^0(k) - y_s^* \Vert) - \alpha_\ell(\Vert x(k) - x_s^0(k) \Vert)\\
		&\leq -\frac{1}{2}\alpha_\beta(\frac{1}{L_g}\Vert x_s^0(k) - x_s^* \Vert) - \alpha_\ell(\Vert x(k) - x_s^0(k) \Vert).
	\end{split}
\end{equation} 
We now exploit a $\mathcal{K}_\infty$ function $$\alpha_{\beta \ell}(\Vert x \Vert) \coloneqq \min \left\{ \frac{1}{2}\alpha_\beta \left(\frac{1}{L_g}\Vert x \Vert \right),\alpha_\ell(\Vert x \Vert) \right\}$$ to further develop \eqref{eq:rnmpct_W_decrease_xs*_4} as
 \begin{equation}
	\label{eq:rnmpct_W_decrease_xs*_5}
	\begin{split}
		&V_{N_p}(\hat{x}^0(1 \vert k),y_t,\tilde{\mathbf{v}}(k+1),\tilde{y}_s(k+1)) - V^0_{N_p}(x(k),y_t)\\
		&\leq -\alpha_{\beta \ell}(\Vert x_s^0(k) - x_s^* \Vert) - \alpha_{\beta \ell}(\Vert x(k) - x_s^0(k) \Vert)\\
		&\leq -\alpha_{\beta \ell} \left( \frac{1}{2}\Vert x_s^0(k) - x_s^* \Vert + \frac{1}{2} \Vert x(k) - x_s^0(k) \Vert \right)\\
		&\leq -\alpha_{\beta \ell} \left( \frac{1}{2} \Vert x(k) - x_s^* \Vert \right).
	\end{split}
\end{equation} 
Exploiting a $\mathcal{K}_\infty$ function $\alpha_2(\Vert x \Vert) \leq \alpha_{\beta \ell}\left( \frac{1}{2}\Vert x \Vert \right)$, we finally obtain
 \begin{equation}
	\label{eq:rnmpct_W_decrease_xs*_6}
	\begin{split}
		&V_{N_p}(\hat{x}^0(1 \vert k),y_t,\tilde{\mathbf{v}}(k+1),\tilde{y}_s(k+1)) - V^0_{N_p}(x(k),y_t)\\ &\leq -\alpha_2 \left( \Vert x(k) - x_s^* \Vert \right),
	\end{split}
\end{equation} 
which proves the nominal descent of $W(\cdot,y_t)$. 

We now need to analyze $V_{N_p}(x(k+1),y_t,\tilde{\mathbf{v}}(k+1),\allowbreak \tilde{y}_s(k+1)) - V_{N_p}(\hat{x}^0(1 \vert k),y_t,\tilde{\mathbf{v}}(k+1),\tilde{y}_s(k+1))$, namely the contribution of the perturbation $w(k)$.
First, we prove that $(\tilde{\mathbf{v}}(k+1),\tilde{y}_s(k+1))$ is a feasible solution for $P_{N_p}(x(k+1),y_t)$. As $\hat{x}^0(1 \vert k) \in \mathcal{X}_f(\tilde{y}_s(k+1))$ and $\tilde{\mathbf{v}}(k+1) = \{ v_f(\tilde{y}_s(k+1)), \hdots, v_f(\tilde{y}_s(k+1)) \}$, we have that  $\phi_\pi(j;\tilde{y}_s(k+1),\hat{x}^0(1 \vert k),\tilde{\mathbf{v}}(k+1),\mathbf{0}) \in \mathcal{X}_f(\tilde{y}_s(k+1))$, for all $j = 1,\hdots,N_p$. Therefore, given the definition of the set sequence $\{\mathcal{F}(j)\}_{j \geq 0}$, we have that $\phi_\pi(j;\tilde{y}_s(k+1),x(k+1),\tilde{\mathbf{v}}(k+1),\mathbf{0}) \in \mathcal{X}_f(\tilde{y}_s(k+1)) \oplus \mathcal{F}(j)$. The monotonicity property of the sequence $\{\mathcal{R}(j)\}_{j \geq 0}$, which is used to build the restricted constraints $\mathcal{Z}_\pi(j)$, guarantees that $(\mathcal{X}_f(\tilde{y}_s(k+1)) \oplus \mathcal{F}(j) \times \{0\}) \subseteq \mathcal{Z}_\pi(j)$, for all $j = 1,\hdots,N_p$. Moreover, given that $\mathcal{X}_f(\tilde{y}_s(k+1)) \oplus \mathcal{F}(N_p-1) \subseteq \Omega(\tilde{y}_s(k+1))$, it also holds true that $\phi_\pi(N_p;\tilde{y}_s(k+1),x(k+1),\tilde{\mathbf{v}}(k+1),\mathbf{0}) \in \mathcal{X}_f(\tilde{y}_s)$, which proves that the constraints are verified and the last prediction, together with $\tilde{y}_s(k+1)$, is inside the robust invariant set for tracking $\mathcal{T}$.

Let us now analyze $V_{N_p}(x(k+1),y_t,\tilde{\mathbf{v}}(k+1),\tilde{y}_s(k+1)) - V_{N_p}(\hat{x}^0(1 \vert k), \allowbreak y_t,\tilde{\mathbf{v}}(k+1),\tilde{y}_s(k+1))$. We define $\hat{x}^+(j \vert k+1) \coloneqq \phi_\pi(j;\tilde{y}_s(k+1),x(k+1),\tilde{\mathbf{v}}(k+1),\mathbf{0})$ and $\hat{x}_n^+(j \vert k+1)    \coloneqq   \phi_\pi(j;\tilde{y}_s(k+1),\hat{x}^0(1 \vert k),\allowbreak \tilde{\mathbf{v}}(k+1),\mathbf{0})$. Then,
 \begin{equation}
	\begin{split}
		&V_{N_p}(x(k+1),y_t,\tilde{\mathbf{v}}(k+1),\tilde{y}_s(k+1))\\ &- V_{N_p}(\hat{x}^0(1 \vert k),y_t,\tilde{\mathbf{v}}(k+1),\tilde{y}_s(k+1))\\
		&\leq \sum_{j = 0}^{N_p-1} \sigma_{\ell,x} \left( \Vert \hat{x}^+(j \vert k+1) - \hat{x}_n^+(j \vert k+1) \Vert \right)\\ &+ \sigma_f \left( \Vert \hat{x}^+(N_p \vert k+1) - \hat{x}_n^+(N_p \vert k+1) \Vert \right) \\
		&\leq \sum_{j = 0}^{N_p-1} \sigma_{\ell,x} \circ r_j(w(k)) + \sigma_f \circ r_{N_p}(w(k)).
	\end{split}
\end{equation} 
Using a $\mathcal{K}$ function $\lambda_2(\Vert w \Vert) \geq \sum_{j = 0}^{N_p-1} \sigma_{\ell,x} \circ r_j(w) + \sigma_f \circ r_{N_p}(w)$ in $\mathcal{W}$, the last result and \eqref{eq:rnmpct_W_decrease_xs*_6} prove that
 \begin{equation}
	\label{eq:rnmpct_W_decrease_xs*_7}
	\Delta W \leq - \alpha_2(\Vert x(k) - x_s^* \Vert) + \lambda_2(\Vert w(k) \Vert).
\end{equation} 
Equations \eqref{eq:rnmpct_W_decrease_case_1_b} and \eqref{eq:rnmpct_W_decrease_xs*_7} respectively prove the nominal descent property of $W(\cdot,y_t)$ when $\alpha_\ell(\Vert x(k) - x_s^0(k) \Vert) \geq \gamma \alpha_O(\Vert y_s^0(k) - y_s^* \Vert)$ and when $\alpha_\ell(\Vert x(k) - x_s^0(k) \Vert) \leq \gamma \alpha_O(\Vert y_s^0(k) - y_s^* \Vert)$. Defining the $\mathcal{K}$ function $\lambda(\Vert w \Vert) \coloneqq \max \{ \lambda_1(\Vert w \Vert), \allowbreak \lambda_2(\Vert w \Vert) \}$ and the $\mathcal{K}_\infty$ function $\alpha(\cdot) \coloneqq \min \left\{ \underline{\alpha}(\cdot), \alpha_2(\cdot) \right\}$, with $\gamma = \min \left\{ \gamma_1,\gamma_2,\gamma_3 \right\}$, we can unify the two results and get
 \begin{equation}
	\Delta W \leq \lambda(\Vert w(k) \Vert) - \alpha (\Vert x(k) - x_s^* \Vert),
\end{equation} 
which, together with the lower and upper bounds defined in \eqref{eq:rnmpct_W_lower_bound_xs*} and \eqref{eq:rnmpct_W_upper_bound_xs*}, proves that $W(\cdot,y_t)$ is an ISS-Lyapunov function with respect to $x_s^*$, and that $x_s^*$ is input-to-state stable for the closed-loop system.

Input-to-state stability of $x_s^*$ and boundedness of $\mathcal{W}$ imply that, for any $x(0) \in \mathcal{X}_{N_p}$ and for any perturbation signal $\mathbf{w} \in \mathcal{W}^j$, there exist a $\mathcal{K}$ function $\gamma_a(\cdot)$ and a constant $\delta > 0$ such that
 \begin{equation}
	\lim_{k \to \infty} \sup \Vert x(k) - x_s^*\Vert \leq \gamma_a \left( \lim_{k \to \infty} \Vert w(k) \Vert \right) \leq \delta,
\end{equation} 
which proves that the closed-loop system state $x$ converges to the set $x_s^* \oplus \{x: \; \Vert x \Vert \leq \delta \}$. Given the uniform continuity of the control law $\pi(y_s,x,v)$ with respect to $(x,v)$, and the continuity of the output function $h(x,u)$ in all its arguments, this last finding implies that also the system output $y$ converges to a bounded set containing $y_s^*$, the best admissible steady output, which ends the proof.
\begin{remark}
	For large values of $w(k)$ it might still happen that the optimal artificial reference at instant $k$ is exactly $y_s^*$ and temporarily changes to a value $y_s^0(k+1) \neq y_s^*$ at instant $k+1$. Given that the feedback policies $\pi(y_s,x,v)$ are not required to be uniformly continuous with respect to $y_s$, this consideration implies that the set where the system output converges is bounded but might be non-connected. Choosing control laws $\pi(y_s,x,v)$ that are uniformly continuous also with respect to $y_s$ would ensure that the set where the system output converges is also connected.
\end{remark}

\subsection{Proof of the existence of the $\mathcal{K}_\infty$ function $\alpha_\beta(\cdot)$}
We want to prove that it is always possible to choose a feasible $\beta \in (\max \{ \beta_1,\beta_2 \},1)$ such that the function $g(\beta,s)$ can be lower-bounded by a $\mathcal{K}_\infty$ function $\alpha_\beta(s)$. In order for this to be true, it must hold true that, for $\beta \in (\max \{ \beta_1,\beta_2 \},1)$, $g(\beta,0) = 0$, $g(\beta,s) > 0$ for all $s > 0$, and $g(\beta,s_1) < g(\beta,s_2)$ if $s_1 < s_2$.

The first condition is easily proved by noting that in $s=0$ we get $s^2=0$ and $\alpha_O(s) = 0$. Since $0 < \beta < 1$, the second condition rewrites as
 \begin{equation}
	\alpha_O(s) - (1 - \beta)4\bar{\alpha}_f L_g^2 s_2 > 0, \; \forall s > 0.
\end{equation} 
This condition is verified if $\beta$ is such that $\beta > 1 - \frac{\alpha_O(s)}{4\bar{\alpha}_f L_g^2 s^2}$.
Given that $\beta > 1 - \mathfrak{b}_1$, this condition is verified for all $s \in \mathcal{S} \setminus \{0\}$.

Since $0 < \beta < 1$, the third condition requires to check that, for $s_2 > s_1$,
 \begin{equation}
	\alpha_O(s_1)   -   (1 - \beta)4\bar{\alpha}_f L_g^2 s_1^2   -   \alpha_O(s_2)   +   (1 - \beta)4\bar{\alpha}_f L_g^2 s_2^2 < 0, 
\end{equation} 
which reduces to
 \begin{equation}
	\beta > 1 - \frac{\alpha_O(s_2) - \alpha_O(s_1)}{4\bar{\alpha}_fs_2^2 - 4\bar{\alpha}_fs_1^2}.
\end{equation} 
As $\beta > 1 - \mathfrak{b}_2$, also this condition is verified for all $s_1,s_2 \in \mathcal{S}$ such that $s_2 > s_1$. In order to prove that $g(\beta,s)$ is lower-bounded by a $\mathcal{K}_\infty$ function, we must also ensure that $g(\beta,s) \to \infty$ for $s \to \infty$. However, as $\mathcal{S}$ is an interval, an extension of $g(\beta,s)$ can be chosen to grow arbitrarily fast for $s \in \mathbb{R} \setminus \mathcal{S}$. As a consequence, there exists a $\mathcal{K}_\infty$ function $\alpha_\beta(\cdot)$ such that $g(\beta,s) \geq \alpha_\beta(s)$.

\section{Supporting algorithms for the design of the controller} \label{sec:app_algorithms}
The computation of all the ingredients of the proposed MPC is not trivial and has to be carried out iteratively. This is because there is a strong interdependence between all the ingredients that does not allow us to build a simple sequence of steps to be followed. Indeed, to give an example, the computation of the terminal ingredients requires to know $\mathcal{F}(N_p-1)$, but the computation of $\mathcal{F}(N_p-1)$ requires to know the control laws $\pi(\cdot,\cdot,\cdot)$, that should be designed together with the terminal ingredients, given that $V_f(\cdot,\cdot)$ must behave as a CLF under the control law $\pi(y_s,x,v_f(y_s))$. Therefore, in the following we provide several algorithms that are needed to compute some of the ingredients and give some remarks that are needed to support the designer in the most critical steps.
In any case, we suggest to follow this simplified process:
\begin{enumerate}
	\item Design the pre-stabilizing control laws $\pi(\cdot,\cdot,\cdot)$ and the terminal ingredients using a candidate $\mathcal{Y}_t$ set.
	\item Compute the sequences $\{ \mathcal{F}(j) \}_{j \geq 0}$ and $\{ \mathcal{R}(j) \}_{j \geq 0}$ related to the designed laws $\pi(\cdot,\cdot,\cdot)$.
	\item Based on the found sequences, choose a prediction horizon $N_p$ verifying the assumptions and check that the terminal ingredients and the set $\mathcal{Y}_t$ computed at point 1 verify the assumptions, too. If not, go back to point 1.
\end{enumerate}

\subsection{Computation of the local control laws and the terminal ingredients}
\label{alg:terminal_ingredients}
The method we propose for the computation of the local control laws $\pi(y_s,x,v)$ and the terminal ingredients of our robust MPC for tracking is a slightly modified version of the method presented in \cite[Appendix~B]{Limon2018_NMPCT}, and based on \cite{Wan2003ltv,Wan2003auto}, for the computation of the terminal ingredients of a non-robust MPC for tracking. In particular, we consider the case of quadratic stage cost and terminal cost functions, for the sake of simplicity. 

The basic idea behind the method proposed in \cite[Appendix~B]{Limon2018_NMPCT} can be summed up in the following steps:
\begin{itemize}
	\item The chosen set of admissible setpoints $\mathcal{Y}_t$ is divided into smaller subsets $\mathcal{Y}_{s_i}$.
	\item Given $n_i$ samples $y_{s_j} \in \mathcal{Y}_{s_i}$, within the regions given by $(x_{s_j},v_{s_j})$ such that $x_{s_j} = g_x(y_{s_j})$ and $v_{s_j} = g_v(y_{s_j})$, for all $j \in [1,n_i]$, the nonlinear system is approximated as an LTV system, i.e.
	 \begin{equation}
		\begin{split}
			&f(x, u) = f(x_s(y_s), u_s(y_s))\\ &+ \sum_{j=1}^{n_i} \lambda_j \left[ A_{i,j} (x \! - \! g_x(y_s)) \! + B_{i,j} (u \! - \! g_u(y_s)) \right],
		\end{split}
	\end{equation} 
	where \( [A_{i,j}, B_{i,j}] \in \{ [A_{i,1}, B_{i,1}], \ldots, [A_{i,n_i}, B_{i,n_i}] \} \), \( \lambda_j \in [0, 1] \), and \( \sum_{j=1}^{n_i} \lambda_j = 1 \).
	\item Matrices $K_i$ and $P_i$ verifying the Lyapunov inequality 
	\[
	A_{K,i,j}^\top P_i A_{K,i,j} - P_i \leq -Q - K_i^\top R K_i, \quad \forall j,
	\]
	where \( A_{K,i,j} = A_{i,j} + B_{i,j} K_i \), are found by solving LMIs.
	\item Invariant sets $\Omega_i$ based on the control laws $\kappa(x,y_s) = K_i(x - g_x(y_s)) + g_v(y_s)$ are built for each subset $\mathcal{Y}_{s_i}$.
	\item An invariant set for tracking is obtained as 
	\[
	\Gamma = \bigcup_{i} \Gamma_i, \quad \Gamma_i = \left\{ (x, y_s) : x \in g_x(y_s) \oplus \Omega_i, y_s \in \mathcal{Y}_{s_i} \right\}.
	\]
\end{itemize}
In the robust context of the current work, two modifications to this method are made:
\begin{enumerate}
	\item We set the pre-stabilizing control laws as $\pi(y_s,x,v) = K_i (x - g_x(y_s)) + v$, $\forall y_s \in \mathcal{Y}_{s_i}$ and employ $v_f(y_s) = g_v(y_s)$ as terminal control inputs.
	\item Given that we need to find regions $\Omega(y_s)$ and $\mathcal{X}_f(y_s)$ such that $f \left( \hat{x},\pi \left( y_s,\hat{x},v_f(y_s) \right),0 \right) \in \mathcal{X}_f(y_s)$, for all $\hat{x} \in \Omega(y_s)$, we require the found matrices $K_i$ and $P_i$ to cause the contraction of the terminal cost on the nominal system, i.e.
	 \begin{equation}
		A_{K,i,j}^\top P_i A_{K,i,j} \leq \zeta P_i,
	\end{equation} 
	with $\zeta \in (0,1)$. Then, for all $y_s \in \mathcal{Y}_{s_i}$ and for all $x \in \Omega(y_s)$, we check that $f \left( \hat{x},\pi \left( y_s,\hat{x},v_f(y_s) \right),0 \right) \in \mathcal{X}_f(y_s)$. If the last condition is not satisfied, we try again with a new contraction factor $\zeta$.
\end{enumerate}
\begin{remark}
	The described method assumes to know $\mathcal{Y}_t$, even though its computation is possible only after having built the set sequences $\{ \mathcal{F}(j) \}_{j \geq 0}$ and $\{ \mathcal{R}(j) \}_{j \geq 0}$. We suggest to follow this algorithm using a (potential) outer approximation of the final $\mathcal{Y}_t$. Then, should the assumed outer approximation turn out not to verify the assumptions, the designer can start again the design process with a new $\mathcal{Y}_t$ candidate.
\end{remark}

\subsection{Computation of the constants $L_{x,ia}$, $L_{v,ib}$ and $L_{w,ic}$ for component-wise Lipschitz continuous systems}
\label{alg:lipschitz}
In this subsection, we present a method to compute constants $L_{x,ia}$, $L_{v,ib}$ and $L_{w,ic}$ for component-wise Lipschitz continuous systems such that, for all $i \in [1,n]$,
 \begin{equation}
	\label{eq:rnmpct_componentwise_Lipschitz}
	\begin{split}
		&\left\vert f_i(x,\pi(y_s,x,v),w) - f_i(\check{x},\pi(y_s,\check{x},\check{v}),\check{w}) \right\vert\\ &\leq \! \sum_{a = 1}^n L_{x,ia}\vert x_a \! - \! \check{x}_a \vert \! + \! \sum_{b = 1}^m L_{v,ib} \vert v_b \! - \! \check{v}_b \vert\\ &+ \! \sum_{c = 1}^r L_{w,ic} \vert w_c \! - \! \check{w}_c \vert.
	\end{split}
\end{equation} 
is verified.

For each $i \in [1,n]$, provided a candidate tuple $(L_{x,i1},\hdots,L_{x,in},L_{v,i1},\hdots,L_{v,im},L_{w,i1},\hdots,L_{w,ir})$, solve the optimization problem 
 \begin{equation}                                                                                                  
	\begin{split}                                                                                                    
		&\max_{ (x,v,w),(\check{x},\check{v},\check{w}),y_s} \!\!\!\!\!\!\! e_i \coloneqq \left\vert f_i(x,\pi(y_s,x,v),w) \! -\! f_i(\check{x},\pi(y_s,\check{x},\check{v}),\check{w}) \right\vert\\ &- \! \sum_{a = 1}^n L_{x,ia}\vert x_a \! - \! \check{x}_a \vert \! - \! \sum_{b = 1}^m L_{v,ib} \vert v_b \! - \! \check{v}_b \vert \! - \! \sum_{c = 1}^r L_{w,ic} \vert w_c \! - \! \check{w}_c \vert\\
		&\mathrm{s.t.:}\\
		&(x,\pi(y_s,x,v),w),(\check{x},\pi(y_s,\check{x},\check{v}),\check{w}) \in \mathcal{Z} \times \mathcal{W}, \; y_s \in \mathcal{Y}_t.
	\end{split}
\end{equation} 
If $e_i \leq 0$, then the candidate tuple verifies \eqref{eq:rnmpct_componentwise_Lipschitz}. Otherwise, choose a different candidate tuple and solve the previous optimization problem again. 
\begin{remark}
	Given that it is important to find small component-wise Lipschit constants, we suggest to start from large values for all of them, so as to verify \eqref{eq:rnmpct_componentwise_Lipschitz}, and then start gradually decreasing the value of each constant till the condition $e_i \leq 0$ is lost.
\end{remark}

\subsection{Computation of $\mathcal{F}(j)$ and $\mathcal{R}(j)$}
\label{alg:F_R}
In this subsection, we present a method to compute the sequences $\left\{ \mathcal{F}(j) \right\}_{j \geq 0}$ and $\left\{ \mathcal{R}(j) \right\}_{j \geq 0}$ for component-wise Lipschitz systems verifying \eqref{eq:rnmpct_componentwise_Lipschitz}.

In the component-wise Lipschitz case, the previously introduced functions $c_{i,j}(w)$ can be redefined as follows.
For all $i = 1,\hdots,n$, define $c_{i,0}(w) \coloneqq \sum_{c = 1}^r L_{w,ic} \cdot \vert w_c \vert$. Then, for $j > 0$, $c_{i,j}(w) \coloneqq \sum_{a = 1}^n L_{x,ia} \cdot c_{a,j-1}(w)$.

Since $\left\{\mathcal{F}(j)\right\}_{j \geq 0}$ and $\left\{\mathcal{R}(j)\right\}_{j \geq 0}$ have to take into account all the possible realizations of $w \in \mathcal{W}$, we consider the worst case, i.e. a realization $\breve{w}$ such that $\vert \breve{w}_i \vert = \bar{w}_i$, for all $i \in [1,r]$. Therefore, for all $j \geq 0$, $\mathcal{F}(j)$ can be obtained as
 \begin{equation}
	\label{eq:rnmpct_Lipschitz_F}
	\mathcal{F}(j) = \left\{ x \in \mathbb{R}^n: \; \vert x_i \vert \leq c_{i,j}(\breve{w}), \; \forall i \in [1,n] \right\}.
\end{equation}  
Now we introduce the functions $d_{i,j}(\cdot) \coloneqq \sum_{k = 0}^{j-1} c_{i,k}(\cdot)$, which can be used to compute $\mathcal{R}(j)$ as
 \begin{equation}
	\label{eq:rnmpct_Lipschitz_R}
	\mathcal{R}(j) = \left\{ x \in \mathbb{R}^n: \; \vert x_i \vert \leq d_{i,j}(\breve{w}), \; \forall i \in [1,n] \right\}.
\end{equation}  
Next, we present the lemma thanks to which the defined sequences $\left\{\mathcal{F}(j)\right\}_{j \geq 0}$ and $\left\{\mathcal{R}(j)\right\}_{j \geq 0}$ ensure that Assumptions \ref{ass:rnmpct_F_sequence} and \ref{ass:rnmpct_R_sequence} are verified.
\begin{lemma}[Consistency of the sequences $\left\{\mathcal{F}(j)\right\}_{j \geq 0}$ and $\left\{\mathcal{R}(j)\right\}_{j \geq 0}$ with Assumptions \ref{ass:rnmpct_F_sequence} and \ref{ass:rnmpct_R_sequence}]
	Given a system that verifies \eqref{eq:rnmpct_componentwise_Lipschitz}, the set sequences $\left\{\mathcal{F}(j)\right\}_{j \geq 0}$ and $\left\{\mathcal{R}(j)\right\}_{j \geq 0}$ computed as in \eqref{eq:rnmpct_Lipschitz_F} and \eqref{eq:rnmpct_Lipschitz_R} verify all the conditions of Assumptions \ref{ass:rnmpct_F_sequence} and \ref{ass:rnmpct_R_sequence}.
\end{lemma}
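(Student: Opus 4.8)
The plan is to reduce every condition in Assumptions~\ref{ass:rnmpct_F_sequence} and~\ref{ass:rnmpct_R_sequence} to a component-wise scalar induction driven by the recursions defining $c_{i,j}$ and $d_{i,j}$, exploiting the fact that both $\mathcal{F}(j)$ and $\mathcal{R}(j)$ are axis-aligned boxes. First I would dispose of the two base cases. In the component-wise Lipschitz setting the class-$\mathcal{K}$ functions $\sigma_{w,ia}(\cdot)$ reduce to $L_{w,ia}|\cdot|$, so $\sum_{a=1}^{r}\sigma_{w,ia}(\bar w_a)=\sum_{a=1}^{r}L_{w,ia}\bar w_a=c_{i,0}(\breve w)$, which shows that the box $\mathcal{F}(0)$ in \eqref{eq:rnmpct_Lipschitz_F} coincides with the set prescribed in Assumption~\ref{ass:rnmpct_F_sequence}. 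Likewise $d_{i,0}(\breve w)$ is an empty sum, so $\mathcal{R}(0)=\{0\}$, as required.

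For the $\mathcal{F}(j)$ bound I would fix $y_s\in\mathcal{Y}_{\pi,s}$, a feasible $(x,\mathbf{v})$ and a $\check x$ with $\check x-x\in\mathcal{F}(0)$, and compare the two nominal trajectories $x(j):=\phi_\pi(j;x,y_s,\mathbf{v},\mathbf{0})$ and $\check x(j):=\phi_\pi(j;\check x,y_s,\mathbf{v},\mathbf{0})$. Writing $e_i(j):=|\check x_i(j)-x_i(j)|$, both trajectories use the \emph{same} input $v(j)$ and zero perturbation, so \eqref{eq:rnmpct_componentwise_Lipschitz} applied component-wise with $w=\check w=0$ kills the $v$- and $w$-terms and yields $e_i(j+1)\le\sum_{a=1}^{n}L_{x,ia}\,e_a(j)$. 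Since $e_i(0)\le c_{i,0}(\breve w)$ and the coordinate map $c_{i,j+1}=\sum_{a}L_{x,ia}c_{a,j}$ is monotone in each argument, an easy induction gives $e_i(j)\le c_{i,j}(\breve w)$, that is $\check x(j)\in x(j)\oplus\mathcal{F}(j)$.

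The $\mathcal{R}(j)$ bound is the same induction with a forcing term. Comparing the nominal $x(j)$ with the perturbed $x_w(j):=\phi_\pi(j;x,y_s,\mathbf{v},\mathbf{w})$ and setting $\delta_i(j):=|x_{w,i}(j)-x_i(j)|$, the same input but now a nonzero $w(j)\in\mathcal{W}$ gives, via \eqref{eq:rnmpct_componentwise_Lipschitz} together with $|w_c(j)|\le\bar w_c$, the recursion $\delta_i(j+1)\le\sum_{a=1}^{n}L_{x,ia}\,\delta_a(j)+c_{i,0}(\breve w)$, with $\delta_i(0)=0$. I would then check that $d_{i,j}$ solves exactly this recursion: splitting off the $k=0$ term and interchanging the sums in $d_{i,j+1}=\sum_{k=0}^{j}c_{i,k}$ gives $d_{i,j+1}=c_{i,0}+\sum_{a}L_{x,ia}d_{a,j}$, so induction yields $\delta_i(j)\le d_{i,j}(\breve w)$ and hence condition~1 of Assumption~\ref{ass:rnmpct_R_sequence}. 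The monotonicity condition $\mathcal{F}(j)\oplus\mathcal{R}(j)\subseteq\mathcal{R}(j+1)$ is then immediate, and in fact an equality: the Minkowski sum of the two boxes has half-widths $c_{i,j}(\breve w)+d_{i,j}(\breve w)$, and by definition $d_{i,j+1}=d_{i,j}+c_{i,j}$, so the sum is precisely $\mathcal{R}(j+1)$.

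The one point that needs care, and which I expect to be the main obstacle, is that the component-wise Lipschitz inequality \eqref{eq:rnmpct_componentwise_Lipschitz} is only guaranteed on $\mathcal{Z}\times\mathcal{W}$ with $y_s\in\mathcal{Y}_t$, so both compared points must lie in this domain at every step for the recursions above to be legitimate. For the nominal trajectory this is feasibility; for the perturbed and the displaced trajectories it follows from the very bounds being proved together with the tightening $\mathcal{Z}_\pi(j)=\mathcal{Z}_\pi\ominus(\mathcal{R}(j)\times\{0\})$, which guarantees that any state within $\mathcal{R}(j)$ of a nominal state still satisfies $(\cdot,\pi(y_s,\cdot,v(j)))\in\mathcal{Z}$. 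Thus the feasibility of the perturbed points at step $j$ and the error bound at step $j$ have to be carried jointly through the induction; once this coupling is in place, each inductive step is a direct application of \eqref{eq:rnmpct_componentwise_Lipschitz}.
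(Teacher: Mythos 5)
Your proposal is correct and follows essentially the same route as the paper: a component-wise induction showing the trajectory gaps satisfy the recursions solved by $c_{i,j}$ and $d_{i,j}$, with the monotonicity condition holding with equality because the sets are axis-aligned boxes. Your explicit check that $d_{i,j+1}=c_{i,0}+\sum_{a}L_{x,ia}d_{a,j}$ and your remark about keeping the compared trajectories inside the domain of validity of \eqref{eq:rnmpct_componentwise_Lipschitz} make explicit two points the paper's proof passes over silently, but they do not change the argument.
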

\begin{proof}
	This proof follows similar arguments as in \cite{Limon2010_robust_NMPC} and \cite{Manzano2019_Choki}.
	
	\textbf{Sequence $\left\{\mathcal{F}(j)\right\}_{j \geq 0}$:} Given that in the component-wise Lipschitz case, $\sigma_{w,ia}(w_a) = L_{w,ia}\cdot \vert w_a \vert$, defining 
	 \begin{equation}
		\mathcal{F}(0) = \Big\{ x \in \mathbb{R}^n: \; \vert x_i \vert \leq \sum_{a = 1}^r L_{w,ia} \vert \bar{w}_a \vert, \forall i \in [1,n] \Big\}
	\end{equation} 
	is consistent with the definition of $\mathcal{F}(0)$ given in Assumption \ref{ass:rnmpct_F_sequence}. Let us now analyze the gap between two nominal trajectories obtained starting from two initial conditions $x,\check{x}$ such that $x - \check{x} \in \mathcal{F}(0)$ and employing a feasible control sequence $\mathbf{v}$. We start by noting that, for all $i \in [1,n]$,
	 \begin{equation}
		\begin{split}
			&\vert \phi_{\pi,i}(1;x,y_s,\mathbf{v},\mathbf{0}) - \phi_{\pi,i}(1;\check{x},y_s,\mathbf{v},\mathbf{0}) \vert\\ &\leq \sum_{a = 1}^n L_{x,ia} \vert x_a - \check{x}_a \vert
			\leq \sum_{a = 1}^n L_{x,ia} c_{a,0}(\breve{w}) = c_{i,1}(\breve{w}),
		\end{split}
	\end{equation} 
	which proves that $\phi_{\pi}(1;\check{x},y_s,\mathbf{v},\mathbf{0}) \in \phi_{\pi}(1;x,y_s,\mathbf{v},\mathbf{0}) \oplus \mathcal{F}(1)$.
	We now analyze what happens for $j = 2$, i.e.
	 \begin{equation}
		\begin{split}
			&\vert \phi_{\pi,i}(2;x,y_s,\mathbf{v},\mathbf{0}) - \phi_{\pi,i}(2;\check{x},y_s,\mathbf{v},\mathbf{0}) \vert\\ &\leq \sum_{a = 1}^n L_{x,ia} \vert \phi_{\pi,a}(1;x,y_s,\mathbf{v},\mathbf{0}) - \phi_{\pi,a}(1;\check{x},y_s,\mathbf{v},\mathbf{0}) \vert\\
			&\leq \sum_{a = 1}^n L_{x,ia} c_{a,1}(\breve{w}) = c_{i,2}(\breve{w}),
		\end{split}
	\end{equation} 
	which proves that $\phi_{\pi}(2;\check{x},y_s,\mathbf{v},\mathbf{0}) \in \phi_{\pi}(2;x,y_s,\mathbf{v},\mathbf{0}) \oplus \mathcal{F}(2)$.
	
	Generalizing for all $j > 1$, we get that
	 \begin{equation}
		\begin{split}
			&\vert \phi_{\pi,i}(j;x,y_s,\mathbf{v},\mathbf{0}) - \phi_{\pi,i}(j;\check{x},y_s,\mathbf{v},\mathbf{0}) \vert\\ &\leq \sum_{a = 1}^n L_{x,ia} \vert \phi_{\pi,a}(j-1;x,y_s,\mathbf{v},\mathbf{0}) - \phi_{\pi,a}(j-1;\check{x},y_s,\mathbf{v},\mathbf{0}) \vert\\
			&\leq \sum_{a = 1}^n L_{x,ia} c_{a,j-1}(\breve{w}) = c_{i,j}(\breve{w}),
		\end{split}
	\end{equation} 
	allowing us to conclude that the sequence $\left\{\mathcal{F}(j)\right\}_{j \geq 0}$ verifies all the conditions of Assumption \ref{ass:rnmpct_F_sequence}.
	
	\textbf{Sequence $\left\{\mathcal{R}(j)\right\}_{j \geq 0}$:} Let us analyze the gap between a perturbation-free trajectory and a perturbed trajectory obtained starting from the same initial condition $x$ and employing the same feasible control sequence $\mathbf{v}$. We start by noting that, for all $i \in [1,n]$,
	 \begin{equation}
		\begin{split}
			&\vert \phi_{\pi,i}(1;x,y_s,\mathbf{v},\mathbf{w}) - \phi_{\pi,i}(1;x,y_s,\mathbf{v},\mathbf{0}) \vert\\ &\leq \sum_{a = 1}^r L_{w,ia} \vert \bar{w}_a \vert
			= c_{i,0}(\breve{w}) = d_{i,1}(\breve{w}),
		\end{split}
	\end{equation} 
	which proves that $\phi_{\pi}(1;x,y_s,\mathbf{v},\mathbf{w}) \in \phi_{\pi}(1;x,y_s,\mathbf{v},\mathbf{0}) \oplus \mathcal{R}(1)$. Then we analyze the gap after 2 steps, namely
	 \begin{equation}
		\begin{split}
			&\vert \phi_{\pi,i}(2;x,y_s,\mathbf{v},\mathbf{w}) - \phi_{\pi,i}(2;x,y_s,\mathbf{v},\mathbf{0}) \vert \leq \sum_{a = 1}^r L_{w,ia} \vert \bar{w}_a \vert\\
			&+ \sum_{b = 1}^n L_{x,ib} \vert \phi_{\pi,b}(1;x,y_s,\mathbf{v},\mathbf{w}) - \phi_{\pi,b}(1;x,y_s,\mathbf{v},\mathbf{0}) \vert\\
			&\leq c_{i,0}(\breve{w}) \! + \! \sum_{b = 1}^n L_{x,ib}c_{b,0}(\breve{w}) \! = \! c_{i,0}(\breve{w}) \! + \! c_{i,1}(\breve{w}) \! = \! d_{i,2}(\breve{w}),
		\end{split}
	\end{equation} 
	which proves that $\phi_{\pi}(2;x,y_s,\mathbf{v},\mathbf{w}) \in \phi_{\pi}(2;x,y_s,\mathbf{v},\mathbf{0}) \oplus \mathcal{R}(2)$. Finally, generalizing $\forall j > 1$,
	 \begin{equation}
		\begin{split}
			&\vert \phi_{\pi,i}(j;x,y_s,\mathbf{v},\mathbf{w}) - \phi_{\pi,i}(j;x,y_s,\mathbf{v},\mathbf{0}) \vert \leq \sum_{a = 1}^r L_{w,ia} \vert \bar{w}_a \vert\\
			&+ \sum_{b = 1}^n L_{x,ib} \vert \phi_{\pi,b}(j-1;x,y_s,\mathbf{v},\mathbf{w}) - \phi_{\pi,b}(j-1;x,y_s,\mathbf{v},\mathbf{0}) \vert\\
			&\leq \sum_{k = 0}^{j-1} c_{i,k}(\breve{w}) = d_{i,j}(\breve{w}),
		\end{split}
	\end{equation} 
	which allows us to state that the first condition of Assumption \ref{ass:rnmpct_R_sequence} is verified. We now turn our attention to the monotonicity of the sequence $\left\{\mathcal{R}(j)\right\}_{j \geq 0}$. Note that, as the sets $\mathcal{R}(j)$ and $\mathcal{F}(j)$ are box-shaped, we get that
	 \begin{equation}
		\mathcal{R}(j) \oplus \mathcal{F}(j) = \left\{ x \in \mathbb{R}^n: \! \vert x_i \vert \leq c_{i,j}(\breve{w}) \! + \! d_{i,j}(\breve{w}), \forall i \in [1,n] \right\},
	\end{equation} 
	but $c_{i,j}(\breve{w}) + d_{i,j}(\breve{w}) = d_{i,j+1}(\breve{w})$. Consequently, $\mathcal{R}(j) \oplus \mathcal{F}(j) = \mathcal{R}(j+1)$.
\end{proof}

%%%%%%%%%%%% Supplementary Methods %%%%%%%%%%%%
%\footnotesize
%\section*{Methods}

%%%%%%%%%%%%% Acknowledgements %%%%%%%%%%%%%
%\footnotesize
%\section*{Acknowledgements}

%%%%%%%%%%%%%%   Bibliography   %%%%%%%%%%%%%%

\bibliography{rnmpct_tac_bibliography}

%%%%%%%%%%%%  Supplementary Figures  %%%%%%%%%%%%
%\clearpage

%%%%%%%%%%%%%%%%   End   %%%%%%%%%%%%%%%%
%\end{multicols}  % Method B for two-column formatting (doesn't play well with line numbers), comment out if using method A
\end{document}